\ifpdf\setlength{\pdfpagewidth}{8.5in}\setlength{\pdfpageheight}{11in}\fi
\newcommand{\eat}[1]{}
\newtheorem{thm}{Theorem}[section]
\newtheorem{theorem}{Theorem}[section]
\newtheorem{lemma}[thm]{Lemma}
\newtheorem{example}[thm]{Example}
\newtheorem{corollary}[thm]{ Corollary}
\newcommand{\qed}{\hfill \rule{1ex}{1ex}\medskip\\}
\newenvironment{proof}{\paragraph{Proof}}{\qed}
\title{On the Tradeoff between Stability and Fit}
\author{
\alignauthor Edith Cohen\\
        \affaddr{Microsoft Research}\\
        \affaddr{Mountain View, CA 94043, USA}\\
       \email{edith@cohenwang.com}
% \and
\alignauthor Graham Cormode,  Nick Duffield,\\  Carsten Lund\\
        \affaddr{AT\&T Labs--Research}\\
        \affaddr{Florham Park, NJ 07932, USA}
       \email{\{graham,duffield,lund\}@research.att.com}
}
\author{
Edith Cohen \thanks{Microsoft Research, SVC (USA)  and Tel Aviv
  University (Israel)}
 \and Graham Cormode \thanks{
  AT$\&$T Labs-Research, 180 Park Avenue, Florham Park, NJ, USA.}
\and Nick Duffield $^\dagger$
\and Carsten Lund $^\dagger$
 \\
{\tt \small edith@cohenwang.com  \{graham,duffield,lund\}@research.att.com}
  } % }
\begin{document}

 \date{}

\maketitle

\begin{abstract}
In computing, as in many aspects of life, changes incur cost.  
Many optimization problems are formulated as a one-time instance
starting from scratch. 
However, a common case that arises is when we already have a set of prior
assignments, and must decide how to respond to a new set of
constraints, given that each change from the current assignment comes
at a price. 
That is, 
we would like to maximize the fitness or efficiency of our system, but
we need to balance it with the changeout cost from the previous state.

We provide a precise formulation for this tradeoff and analyze the
 resulting {\em stable extensions} of some fundamental problems in
 measurement and analytics.
Our main technical contribution is a 
stable extension of PPS (probability proportional to size)
weighted random sampling, with applications to monitoring
and anomaly detection problems. 
We also provide a general framework that applies to top-$k$, minimum
spanning tree, and assignment.  
In both cases, we are able to provide exact solutions, and discuss
efficient incremental algorithms that can find new solutions as the
input changes. 
\end{abstract}
 % \thispagestyle{empty}

 % \newpage
\section{Introduction}

Most textbook optimization problems have a clean statement: given a
set of constraints and an optimization target, we are free to choose
any feasible solution, and so can strive for the optimal result. 
However, when applying these techniques, we may find that we do not
begin with a clean slate. 
Rather, we often have some (partial) assignment, perhaps resulting
from a previous optimization over a prior input. 
This assignment is unlikely to be optimal for the current instance,
but each modification to the current assignment can come with a cost. 
We therefore have to balance the improvement from a new assignment
(the fit)
with the cost of changing from the current assignment to the new one
(the stability). 

Consider a network operator who is leasing network connections
to meet customer demand. 
When there are no existing leases, this can be modeled as a standard
graph optimization problem.
But after an initial solution is found, customer demand may change.
The operator then has to solve a more complex optimization,
since there is a (fixed) cost to breaking an existing lease and establishing a
new one. 
This cannot easily be represented as an instance of the original
problem, due to the mixture of pricing schemes.

Formalizing this setting, we target applications where the {\em input
  state} $\boldsymbol{x}\in {\cal X}$ changes over time.
We maintain an {\em output}  
$S\in {\cal S}$ that can be changed in response to changes in
  the input.  
We are interested in the tradeoff between the
{\em fit} (or {\em efficiency}) of the output with respect to the current input and the {\em
  changeout cost} from the previous output.
%
%{\bf Why is stability interesting ?}
This requirement for `stability' arises in a variety of different
applications:

\begin{trivlist}
\item
{\sl Human-interpretable output:}
%Application Management/Supervision/Views by people:} 
In many settings devoted to monitoring and managing a system, there is
a ``dashboard'' that presents information to an operator. 
For example, this could be details of the most congested links in a
network, the heaviest users of the system, 
locations with anomalous activity, and so on. 
The underlying information may be rapidly changing, so that constantly
updating the display would lead
to a confusing mess of information.  
Instead, the operator should be presented with a view which is
reasonably stable, so that they can absorb the information,
and track the change in properties of items over time. 
At the same time, the output should be close to optimal in
detailing the most important items. 

\item
{\sl The overhead of change:}   
In a client-server assignment or a routing problem, changes in the
output may correspond to cache swaps,  job migration,  or
modifications to a routing table. 
In all of these cases, each change incurs a cost, and very rapid
change can lead to poor performance: delays, due to routes changing
midway, or low throughput due to constant reassignment of machines. 
We are therefore willing to settle for a 
solution that is sub-optimal at any instant, but which 
changes only slowly. 

\item
{\sl Audience-aware advertising:}
Consider a roadside electronic billboard that can respond to the 
set of road users passing by. 
We can choose an advertisement to display in response to demographics
of the users, but constantly changing the ad will make it impossible
for any of them to absorb. 
%The output is a 
%view of some property of the state:
% congested links in a network, heaviest users of a system,
% top downloaded u-tube videos,  targeted ads to the
%  set of people currently present in an intersection. We
%want the output to be close to optimal but yet stable enough so the
%human viewers are able to process it.

\item
{\sl Allocating Resources to Monitoring:}
Analysis 
such as change or anomaly
detection requires tracking behavior 
of items over time.
In high-throughput systems, it is feasible to collect detailed
statistics on only a representative subset of active items. 
Picking the items to monitor that are currently the most 
`interesting' 
 (top-$k$ or a random sample)
may fail to keep any around long enough to build sufficient
history for particular items, so we prefer to retain monitored items while they are useful. 
Moreover, to efficiently store historic summaries for an extended time period,
we only need to record the
modifications to the summary made over time, so limiting
modifications means that less storage is required.
\end{trivlist}

\medskip
\noindent 
{\bf Contributions:}
We formalize and motivate the algorithmic problem of balancing fitness
and stability.
Our main results concern the {\em stable extension} of some
fundamental problems in measurement and analytics. 
We first consider sampling with stability, and focus on 
PPS (probability proportional to size) weighted random samples.
We then continue to a selection of fundamental optimization problems
which includes matroids
and introduce a unified model for the stable extension of
top-$k$ set,   Minimum Spanning Tree, and Assignment.
An empirical study of our methods demonstrates their advantages
over alternative approaches. % simple time-decay.
Last, we point to further classic problems with natural stable extensions.

\begin{itemize}
\item
%\medskip
%\item
%\noindent
{\em Stable PPS sample}:
A PPS sample is a weighted random sample where the sampling probability
of each entry is proportional to its weight, but truncated to be at
most $1$ \cite{Hajek1964}.
PPS sampling minimizes the sum of
variances across all distributions with the same expected size sample.
We study PPS as a method to choose a representative selection of
input data (as needed in the motivating applications), and consider
its stable extension. 
In this stable extension, the output is a vector $\boldsymbol{p}$ of inclusion
probabilities which sum to $k$ and the fitness is the respective sum of
variances.  We also maintain a sample
of entries sampled independently according to the current output distribution.
We measure the changeout cost by the expectation
of the set difference of samples between the two sample distributions. 
Under the appropriate procedure, we show that this can be kept to the
minimal value, the $L_1$ difference between the initial and
target distributions. 

\item
{\em Optimization problems with additive fitness function and
  fixed-size output}:   
The simplest problem in this class,  
where there are no further restrictions on the
 output beyond having a fixed size $k$, is to produce the top-$k$ set, containing the $k$ heaviest entries.
In the stable extension, a  valid
output is any subset  of size $k$, where
the fitness of a particular subset
is the  sum of values of entries included in the subset  and the changeout cost
is the set difference between outputs.  The stable extension of
top-$k$ models
monitoring of heavy users or maintaining the most effective cache.
 This framework also applies to
assignment, i.e. minimum bipartite matching (where output must be a
bipartite matching),  and minimum
spanning tree (output must be a spanning tree).

We establish a relation between the stable extension of an
optimization problem in this class
and instances of the original optimization
problem on a parametrically specified modified input.
This relation allows us to apply an algorithm for the original
optimization problem as a black box to obtain a stable solution.
Moreover, we establish
a mapping between a dynamic version of the stable extension (where 
we must efficiently modify the output in response to incremental
updates to the input) and the dynamic version of the original
optimization problem.  This mapping similarly allows 
off-the-shelf
applications of existing dynamic algorithms 
to efficiently recompute a stable solution when the input
is modified.

 % A natural application of stable top-$k$ is maintaining the most
 % effective cache of $k$ items with a cost for cache replacements, where
 % an input vector records request frequencies.
 % Another is in an application management module for tracking heavy users. 

\end{itemize}

\section{Relation to Time decay models}

A natural first attempt at providing stability of output is
to stabilize the input, which 
is commonly performed using
some time-decayed average (via an exponential, polynomial,
or sliding window decay function) over recent input states~\cite{StatMethods:ebook,CoSt:pods03f}.   
Decay models are successful in capturing the current state when
inputs can be modeled as noisy
samples from a slowly changing underlying  distribution.  In this case,
the time-decayed  (moving) average better captures the ``real'' current state
than the current input.  

When using decay models directly to obtain stability of output, we can compute  
the best-fit output for the time-decayed average.  
This approach lacks, however, when similarity of the input does
not guarantee  similarity, or stability, of the output.

 % Specifically, if the output function does not have some ``distance
 % preserving'' property (similar inputs mean similar output), 
 % decay models will not help to achieve stability. 
 % This also holds when 
 % output is sensitive to small perturbation in the input.
For example, the exact location of optimal cluster centers will vary
even on smoothed input, and so incur the fixed changeout cost. 
Another example is 
a top-$k$ set over an input where $2k$ values are incremented
uniformly and independently at random.  
The top-$k$ set has sensitivity to small perturbation, which
smoothing the data will not eliminate.
Ideally, we would like the output to be stable over
time: the small benefit of choosing the top-ranked values does not
outweigh the cost of output changes. 
Moreover, when the output function
{\em is} insensitive to small perturbations in the input, we cannot fine
tune the tradeoff of fit and stability by controlling the decay
parameter: the stability of the output is not necessarily an increasing
function of the decay parameter.  
Even if we could do so, we would have no analytic guarantees over the tradeoff.
In contrast, our formulation shows that we can find optimal solutions 
for any point on the tradeoff curve. 

Therefore, as we also demonstrate empirically, while time-decay
can be combined with our models as a first phase, (such as to replace
the input by its time-moving average when the latter better captures
the current state), it is not designed to optimally trade off
stability and  fit, and thus, in and of itself,
does not produce
a satisfactory solution to our problem. 

\section{Model}

 The fit of output $S \in {\cal S}$ to input $\boldsymbol{x} \in X$ is  measured by a
{\em fitness function} $\phi: {\cal S} \times X\rightarrow R$.
The {\em best-fit}  $\text{{\sc opt}}(\boldsymbol{x})$ is the output with maximum fitness
for input $\boldsymbol{x}$:
$$%\forall \boldsymbol{x},\ 
\text{{\sc opt}} (\boldsymbol{x})=\arg\max_{S\in {\cal S}} \phi(S,\boldsymbol{x})\ .$$

Output domains $\cal S$ and corresponding best-fits 
can be assignments (minimum cost assignment),  routing tables
(shortest path routing), subsets of size
$k$ (top-$k$ set), spanning trees (MST), or sampling
distributions (PPS sampling).
The {\em changeout cost}  from output $S_1$ to $S_2$ is 
% measured by 
a
distance function on ${\cal S}\times {\cal S}$: 
$d(S_1,S_2)\geq 0$.

When presented with a new input  $\boldsymbol{x}$,  and a current output $S'$,   
we are interested in a {\em $\Delta$-stable} optimum, which is the best-fit
  output subject to a limit $D$ on changeout cost, and its respective fitness:
\begin{align*}
\Delta\text{-{\sc opt}}(S',\boldsymbol{x},D)   &= \arg\max_{S | d(S,S')\leq D}  \phi(S,\boldsymbol{x}) \\ % \quad \text{ and } \\ 
\Delta\text{-}\phi (S',\boldsymbol{x},D) &= \max_{S | d(S,S')\leq D}  \phi(S,\boldsymbol{x}) 
 % \phi(\boldsymbol{x},  \Delta\text{-{\sc opt}}(S',\boldsymbol{x},D))
\end{align*}
When fixing $\boldsymbol{x}$ and $S'$, and
varying the allotted change $D$,  we obtain
the optimal tradeoff between fit and changeout cost
using the points $(D,\Delta\text{-}\phi (S',\boldsymbol{x},D))$.

The {\em $\alpha$-stable} optimum $S$ is defined with respect to
 a parameter $a$, which specifies a relation between changeout
cost and improvement in fitness:
\begin{align}
\alpha\text{-{\sc opt}}( S',\boldsymbol{x},a) &= 
 \arg\max_S \big(\phi(S,\boldsymbol{x})-a\ d(S,S')\big) 
 % \text{ and }
 \label{localopt}  \\
 % \quad
\alpha\text{-}\phi (S',\boldsymbol{x},a) &= \phi(\alpha\text{-{\sc opt}}(S',\boldsymbol{x},a), \boldsymbol{x})
 \nonumber
\end{align}
By sweeping the Lagrange multiplier $a$, we can  obtain the tradeoff 
using the points
\begin{equation} \label{tpoint} \big(d(S',\alpha\text{-{\sc opt}}( S',\boldsymbol{x},a)),\alpha\text{-}\phi( S',\boldsymbol{x},a)\big)\ .
\end{equation}

The fitness of the $\alpha$-stable output decreases with
$a$ and that of a 
$\Delta$-optimal one increases with $D$.
For $a=0$, we allow $D$ to be arbitrarily large, and the
$\alpha$-stable optimum is the current best-fit
$\text{{\sc opt}}(\boldsymbol{x})$.  When $a$ is
sufficiently large, $D$ is forced to be small and the $\alpha$-stable
optimum is closer to the previous $S'$.
Two desirable properties satisfied by the problems we
consider are concavity and output monotonicity of the tradeoff.
These properties simplify the computation of the tradeoff and also
mean that the choice of the best tradeoff point is predictable and tunable.

\medskip
\noindent
{\bf Concave tradeoff:}
Concavity means that the function
$\Delta\text{-}\phi (S',\boldsymbol{x},D)$ is  concave in $D$, so that
 the marginal improvement in fitness for 
 increase in allowed changeout cost is non-increasing.  
When the function is continuous
and differentiable in $D$, the marginal improvement is the first derivative
$\frac{\partial  \Delta\text{-}\phi (S',\boldsymbol{x},D)}{\partial D}$ and
concavity means
$\frac{\partial^2  \Delta\text{-}\phi (S',\boldsymbol{x},D)}{\partial^2 D} <
0$.
In this case, when the parameters satisfy 
  $\Delta\text{-{\sc opt}}(S',\boldsymbol{x},D)= \alpha\text{-{\sc
      opt}}(S',\boldsymbol{x},a)$, then
$a = \frac{\partial  \Delta\text{-}\phi (S',\boldsymbol{x},D)}{\partial D}$.

\medskip
\noindent
{\bf Output monotonicity:}
  When the output can be expressed as a vector, monotonicity means
  that each entry (inclusion, sampling probability, etc.) is monotone
when sweeping $a$ and looking at the 
$\alpha$-stable optimum (or equivalently, when sweeping $D$ and looking at the
$\Delta$-stable optimum).
In particular, when output entries are binary,
then for each $i\in S$, there is at most one threshold value $\tau$ so
that $i\not\in \alpha\text{-}\text{{\sc opt}}( S,\boldsymbol{x},a)  \iff a< \tau$ and for each $i\not\in S$
there is at most one threshold value of $\tau$ so
that $i\in  \alpha\text{-}\text{{\sc opt}}( S,\boldsymbol{x},a)  \iff a< \tau$.

\subsection*{Problem formulations}
The applications we target have a sequence of inputs.  A natural 
aim is to seek outputs that are $\alpha$-stable  in each step with
respect to the same parameter $a$ --- meaning
 that we ``price'' the changeout cost the same across steps.
Alternatively or additionally,  we may need to limit the changeout at
each step.

\medskip
\noindent
{\bf Stable solutions and tradeoff:}
The underlying basic algorithmic problem is to compute,
for a given input $\boldsymbol{x}$  and current output
$S$, the respective $\Delta$-stable  and the $\alpha$-stable optimum solutions.
 % Ideally, we would like to find a description of the full tradeoff for all values
 % of $\alpha$ or $\Delta$.

\medskip
\noindent
{\bf Incremental updates:}
 When in each step the input is provided as a small update to the previous one,
 such as  a rewrite, increment, or decrement of a single entry of the
 vector, we would like 
to efficiently compute a new
$\alpha$-stable optimum 
 instead of applying the batch
algorithm to compute it from scratch.

\medskip
\noindent
{\bf Offline problem:}
In the related {\em offline} problem, the
input is a
  sequence $\boldsymbol{x}^{(j)}\in {\cal X}$ ($j\in[N]$) of states and the solution is a sequence
  $S_j\in {\cal S}$ ($j\in [N]$)  of outputs that maximizes
\begin{equation} \label{globalopt}
{\sum_{j\in [N]}} \phi(S_j,\boldsymbol{x}^{(j)})-\alpha {\sum_{j\in[N]}} d(S_j, S_{j-1})\
.
\end{equation}
$S_0$ is defined to be some initial state, or $\emptyset$.
To maximize stability ($\alpha\rightarrow \infty$), the solution is
$S_j\equiv S$ for all $j\in [N]$ where $S=\arg\max_{S\in {\cal S}}
\sum_{j\in[N]} \phi(S,\boldsymbol{x}^{(j)})$\footnote{In
  general, this is {\em not} equivalent to the original optimization
  problem applied to the combination of all inputs.}. 
 For maximum fit,
($\alpha=0$), we use the best-fit at each timestep, $S_j=\text{{\sc opt}}(\boldsymbol{x}^{(j)})$.
 This offline stable optimum is interesting when we seek to minimize global
 change and are willing to tolerate worse fit occasionally in some steps.   We
 would then like an online solution which is closer to the offline stable
optimum, and possibly use a model of the data to achieve this.
  Our formulation of computing a stable solution at each step  is
  appropriate when the aim is to guarantee a continuously good fit, as
  in the motivating scenarios outlined in the introduction.
Note that our model has no prediction component, which
fundamentally sets it apart from other treatments of adapting to a
changing or unknown state including metrical task
systems\cite{BorodinLinialSaks:JACM1992} and regret minimization in
decision theory.

\section{Stable PPS sample}

In this section, 
we consider the stable extension of PPS sampling.  
The input domain ${\cal X}$ consists of real non-negative vectors $\boldsymbol{w}\in R^n_+$
  and the output domain ${\cal S}$  contains probability distributions
  specified by all vectors 
$\boldsymbol{p}\in [0,1]^n$ such that  $\sum_{i=1}^n p_i= k$.
While we modify the distribution,  the actual output is a 
random sample where entry $i$ is sampled with probability $p_i$
and
different entries are sampled independently\footnote{We comment that
  it is often sufficient to only require limited independence, which is easier
to obtain in practice.  
 % Another comment is that independent samples have a
 % drawback of variability in sample size and this can be reduced with 
 %  joint distributions where 
 % the probability of joint inclusion and joint exclusion is at most the
 % product of the respective probabilities (equality holds when
 % independent). These joint distributions satisfy the Chernoff/Hoeffding
 % tail bounds.
}.

\medskip
\noindent
{\bf Fitness function:}
When a value $w_i$ is sampled with probability $p_i$, the unbiased
non-negative estimator with minimum variance on the weight of the entry is the 
  Horvitz-Thompson estimator, i.e. $\frac{w_i}{p_i}$ when the entry is
  sampled and $0$ otherwise.  
The variance is %$\frac{w_i^2}{p_i}-w_i^2$.
$w_i^2(1/p_i-1)$.

We measure the quality of a sample distribution $\boldsymbol{p}$ for input $\boldsymbol{w}$
 by the sum of variances $\sum_i w_i^2(1/p_i -1)$.  We
can omit the additive term $-\sum_i w_i^2$ (which 
does not depend on $\boldsymbol{p}$) and obtain  the fitness function
$$\phi(\boldsymbol{p},\boldsymbol{w})= -{\sum_i}  \frac{w_i^2}{p_i}\ .$$  

The best-fit for a sample of expected size $k$ is the distribution
with minimum sum of variances. 
%When fixing the expected size $k$ of the sample,  the best-fit, which
%is the distribution with minimum
%sum of variances, is
This is achieved by PPS (Probability Proportional to Size) sampling
probabilities, $\boldsymbol{p}=\text{pps}(k,\boldsymbol{w})$, 
obtained by solving for $\tau$ the following equation, using $p_i=\min\{1, w_i/\tau\}$:  
\begin{equation} \label{ppstau}
{\sum_i}  \min\{1, w_i/\tau\}  =k\ .
\end{equation}
 %  We use the function notation $\tau\text{-pps}({\bf
  % w},k)$ for the solution of \eqref{ppstau}.

\begin{algorithm}[t]
\begin{algorithmic}[1]
 \Procedure{Subsample}{$S,\boldsymbol{p},\boldsymbol{q}$}
\For {$i\in [n]$}
\If {$i\not\in S$ and $q_i>p_i$} 
\If {{\sc rand}$()<\frac{q_i-p_i}{1-p_i}$}
\State $S\gets S\cup\{i\}$ \\ 
 \Comment{$i$ is sampled-in with probability $\frac{q_i-p_i}{1-p_i}$}
\EndIf
\EndIf
\If {$i\in S$ and $q_i<p_i$}
\If {{\sc rand}$()> \frac{q_i}{p_i}$}
\State $S\gets S\setminus\{i\}$ \\ \Comment{$i$ is sampled-out
  with probability $1-\frac{q_i}{p_i}$}
\EndIf
\EndIf
\EndFor\\
\Return{$S$}
 \EndProcedure
\end{algorithmic}
\caption{{\sc Subsample} procedure}
\label{alg:subsample}
\end{algorithm}

\medskip
\noindent
{\bf Distance function:}
 The  random sample is a set of entries and our distance
 function measures the {\em expected} set difference between the samples.
The distance therefore depends on the particular {\em subsampling
  procedure} we employ to move from a sample $S$ drawn from
 probability distribution $\boldsymbol{p}$ to a new one drawn from $\boldsymbol{q}$.
%to the other.
%The procedure takes as input the current and new distributions $\boldsymbol{p}$ and
%$\boldsymbol{q}$ and the current sample $S$, and outputs a new sample.
Clearly, the distance  must be
at least $\|\boldsymbol{p}-\boldsymbol{q}\|_1 =\sum_i  |p_i-q_i|$. 
We can use a {\sc Subsample} procedure 
(pseudo-code in Algorithm~\ref{alg:subsample}),  which
has distance {\em exactly} $d(\boldsymbol{p},\boldsymbol{q})= \|\boldsymbol{p}-\boldsymbol{q}\|_1 $
and is therefore minimal.  
Subsampling results in input and output samples
that are {\em coordinated}, meaning they are as similar as possible
subject to each conforming to a different distribution.
%Correctness is easy to check: when the input to {\sc Subsample} is an independent sample according to
%$\boldsymbol{p}$, the output is an independent sample according to $\boldsymbol{q}$
%and the expected distance is $d(\boldsymbol{p},\boldsymbol{q})$.
%
It is possible to modify {\sc Subsample} to use a VarOpt sampling
procedure~\cite{varopt_full:CDKLT10}, which can ensure that the number
of inclusions and exclusions are each between the floor and ceiling of
their expectation. 
Alternatively, we can associate a permanent random number
(PRN)~\cite{BrEaJo:1972} $u_i \in U[0,1]$ so that 
$i \in S \iff u_i \leq p_i$.  
This means that each sample is drawn from the right distribution,
while the samples are coordinated across time steps. 
%A modification of {\sc Subsample} which allows us
%to decrease the variance in sample size
%(below that of independent samples), is to use
%a VarOpt sampling procedure~\cite{varopt_full:CDKLT10}  to
%implement (separately) the increase and decrease components.
%This has the effect that the actual number of
%new inclusions (exclusions) in the sample is between the floor and
%ceiling of its expectation. 
%An alternative  subsampling procedure is to associate a permanent
%random number \cite{BrEaJo:1972}  (PRN)  $u_i\in
% U[0,1]$ with each entry $i$.  The entry is then included in $S$ if and
% only if $u_i\leq p_i$.   The scheme inherits the dependence of the
% PRNs, when they are independent, the samples are independent,
% although they are coordinated across time steps.

\begin{example} \label{ex1}
Consider sampling a set of expected size $2$ from $6$ entries.
Suppose we have a current sample $S$, drawn according to a uniform
distribution $\boldsymbol{p}$, where $p_i=1/3$ for $i\in [6]$.
This distribution  has the best fit when the input is uniform.

 Suppose that the new input weights are
$$\boldsymbol{w}=(2,4,1,5,6,0)\ .$$
The PPS sampling probabilities for a sample of size $2$,
obtained with threshold $\tau=9$,  are 
\begin{equation} \label{bestfitex}
\text{pps}(2,\boldsymbol{w})=
\bigg(\frac{2}{9},\frac{4}{9},\frac{1}{9},\frac{5}{9},\frac{2}{3},0
\bigg)\ .
\end{equation}

  The expected changeout cost of modifying the sample from one drawn according to the uniform
$\boldsymbol{p}$ to one drawn according to $\text{pps}(2,\boldsymbol{w})$ is
$\|\boldsymbol{p} - \text{pps}(2,\boldsymbol{w}) \|_1 = \frac{4}{3}$.

The actual changeout to the sampling distribution
$\text{pps}(2,\boldsymbol{w})$
depends on our initial sample $S$ and the randomization when subsampling.
Suppose $S=\{3,4\}$, that is, contains the third and fourth entries.
Entries $\{1,6\}$ are not in $S$ and have reduced sampling probabilities and therefore will not
be included in the new sample. Entry $4$ is in $S$ and has increased inclusion probability and
therefore will be included in the new sample.  Entries $\{2,5\}$ which are not in $S$ but for
which inclusion probabilities increased, need to be sampled for inclusion with
respective probabilities $\{1/6,1/2\}$.  Entry $3$ which is in $S$ but has a reduced inclusion probability is sampled out with probability $2/3$.
Therefore, the new sample will include entry $4$ and will not include entries $\{1,6\}$ but may include any subset of entries $\{2,3,5\}$.

 The stability-fit tradeoff includes 
distributions that are between $\boldsymbol{p}$ and $\text{pps}(2,\boldsymbol{w})$.  These
distributions have higher variance but smaller expected change. In the sequel, we demonstrate
how they are computed on this example.
\end{example}

\subsection{Computing the stability/fit tradeoff}
\label{sec:pps}
In this section we establish the following:

\begin{theorem}
\label{thm:pps}
An $\alpha$-stable PPS distribution, $\Delta$-stable PPS distribution,
and a description of the tradeoff mapping  parameter values to fitness
and vice versa, can be computed in $O(n\log n)$ time.
Moreover, the tradeoff is concave and monotone.
\end{theorem}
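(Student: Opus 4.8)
The plan is to recognize the $\alpha$-stable PPS problem as a separable convex program with a single linear constraint and read off the optimizer from its KKT conditions. Writing $\boldsymbol{p}'$ for the current distribution (so $\sum_i p'_i = k$, since $\boldsymbol{p}'$ is itself a valid output), the $\alpha$-stable optimum minimizes the convex objective $\sum_i w_i^2/p_i + a\sum_i |p_i - p'_i|$ over the convex set $\{\boldsymbol{p} : \sum_i p_i = k,\ 0\le p_i\le 1\}$. I would introduce a single multiplier $\mu$ for the budget $\sum_i p_i = k$ and treat each coordinate separately. Since the only nonsmoothness is the kink of $|p_i - p'_i|$ at $p_i=p'_i$, the subgradient condition splits into three regimes governed by $w_i/p'_i$: writing $\tau_+=\sqrt{\mu+a}$ and $\tau_-=\sqrt{\mu-a}$ (so $\tau_+^2-\tau_-^2 = 2a$), the optimizer takes the ``two-threshold PPS'' form $p_i=\min\{1, w_i/\tau_+\}$ when $w_i/\tau_+ > p'_i$ (the entry grows), $p_i = w_i/\tau_-$ when $w_i/\tau_- < p'_i$ (the entry shrinks), and $p_i = p'_i$ otherwise (a dead zone where it is not worth paying the changeout price). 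Setting $a=0$ collapses $\tau_\pm$ to the ordinary PPS threshold of \eqref{ppstau}, which is the sanity check that this generalizes best-fit PPS.

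Next I would pin down the free parameters. For fixed $a$, each $p_i(\mu)$ is continuous and non-increasing in $\mu$ (it decreases, rests at $p'_i$ across the dead zone, decreases again, clipped to $[0,1]$), so $\sum_i p_i(\mu)$ is continuous and non-increasing; hence some $\mu^*$ attains $\sum_i p_i(\mu^*)=k$, and strict convexity of $\sum_i w_i^2/p_i$ in the coordinates with $w_i>0$ makes the optimal $\boldsymbol{p}$ unique. This yields the $\alpha$-stable distribution for a prescribed $a$; the $\Delta$-stable distribution follows by instead choosing $(\tau_-,\tau_+)$ so that $\|\boldsymbol{p}-\boldsymbol{p}'\|_1 = D$ together with $\sum_i p_i = k$, reporting the induced $a$ via $a = \partial(\Delta\text{-}\phi)/\partial D$.

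For the running time and the full tradeoff description, I would sweep $a$ from $0$ (where $\boldsymbol{p}=\text{pps}(k,\boldsymbol{w})$) up to $\infty$ (where $\boldsymbol{p}=\boldsymbol{p}'$). The key structural fact to prove is that the thresholds move monotonically — $\tau_+$ non-decreasing and $\tau_-$ non-increasing in $a$ — so the dead interval $[w_i/\tau_+,\,w_i/\tau_-]$ of each coordinate only widens. This has two payoffs: each $p_i$ is monotone in $a$ (output monotonicity), and every entry changes state (grow/shrink/flat, or uncapping from $1$) a bounded number of times, so the curve has only $O(n)$ breakpoints. Because the state changes are triggered by the fixed quantities $w_i/p'_i$ and $w_i$ being crossed by the monotone $\tau_\pm$, after one $O(n\log n)$ sort I can run a two-pointer sweep that at each step inspects only the frontier event of the grow-set, the shrink-set, and the cap-set — a constant number of candidates — computing for each the parameter at which it triggers by solving the level equation $\sum_i p_i = k$ (a fixed low-degree equation once the active sets are fixed, using incrementally maintained aggregate sums $\sum_{\text{grow}} w_i$ and $\sum_{\text{shrink}} w_i$) and advancing to the earliest. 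Recording $(a, D, \phi)$ at each breakpoint yields a piecewise description from which any $\alpha$- or $\Delta$-query is answered by binary search, all within $O(n\log n)$.

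Finally, concavity and monotonicity of the tradeoff come for free from the convex viewpoint. The value function $V(D)=\min\{\sum_i w_i^2/p_i : \sum_i p_i=k,\ 0\le p_i\le 1,\ \|\boldsymbol{p}-\boldsymbol{p}'\|_1\le D\}$ is the optimum of a convex minimization whose feasible region grows with $D$, so $V$ is non-increasing, and a standard averaging argument (the convex combination of the optimizers for $D_1,D_2$ is feasible for the averaged bound) shows $V$ is convex; hence $\Delta\text{-}\phi=-V$ is concave and non-decreasing in $D$. I expect the main obstacle to be the monotonicity of the two thresholds in $a$ — concretely, verifying $|d\mu/da|\le 1$ along the sweep, since $\tau_+$ non-decreasing and $\tau_-$ non-increasing are exactly equivalent to $d\mu/da\ge -1$ and $d\mu/da\le 1$. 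It is this comparative-statics claim, rather than the KKT form or the easy concavity, that simultaneously delivers output monotonicity and the $O(n)$ bound on the number of breakpoints on which the complexity rests.
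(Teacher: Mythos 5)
Your proposal is correct and arrives at the same structural answer as the paper --- a two-threshold PPS form with a dead zone, piecewise-hyperbolic threshold curves whose $O(n)$ breakpoints are the ratios $w_i/p'_i$ and the weights $w_i$, an $O(n\log n)$ sort-and-sweep, and the identity relating $a$ to the difference of squared thresholds --- but it is organized differently. The paper never forms the joint Lagrangian in $(\mu,a)$: it parametrizes by the changeout budget, splits the problem into two independent one-sided convex programs ({\sc Best-Increase} and {\sc Best-Decrease} of a total amount $x$), derives $\underline{\tau}(x)$ and $\overline{\tau}(x)$ separately via Kuhn--Tucker, and only then couples them by setting each side to $D/2$ and recovering $a=\underline{\tau}(D/2)^2-\overline{\tau}(D/2)^2$. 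That parametrization buys something you had to flag as your ``main obstacle'': monotonicity of the two thresholds is immediate there ($\underline{\tau}$ decreasing and $\overline{\tau}$ increasing in $x$, hence $a(x)$ decreasing and invertible), whereas in your $a$-sweep it requires the comparative-statics bound $|d\mu/da|\le 1$. That bound does hold --- writing the budget constraint as $G(\mu+a)=H(\mu-a)$ with $G$ (total increase) non-increasing and $H$ (total decrease) non-decreasing, implicit differentiation gives $d\mu/da=(G'+H')/(H'-G')$, which lies in $[-1,1]$ since $G'\le 0\le H'$ --- so your route closes, but you should supply this step rather than defer it. In exchange, your treatment is cleaner in one place: concavity of $\Delta$-$\phi$ via convexity of the value function of a convex program whose feasible set grows with $D$ is a crisper and more general argument than the paper's, which obtains concavity only implicitly from the monotone decrease of $\alpha(D)$. (Your $\tau_+^2-\tau_-^2=2a$ versus the paper's $\underline{\tau}^2-\overline{\tau}^2=a$ is only a normalization of whether the changeout counts each unit of probability mass moved once or twice; it does not affect correctness.)
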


Let $\boldsymbol{p}$ be the initial output and $\boldsymbol{w}$ the current
 weights.  
The $\Delta$-stable distribution $\boldsymbol{q}$ is the solution of the
following nonlinear optimization problem:
\begin{align} \label{deltastable}
\text{minimize}
\sum_i \frac{w_i^2}{q_i} \\
\forall i,\  1\geq q_i\geq 0 \nonumber\\
\sum_i q_i = k \nonumber\\
\sum_i |q_i-p_i| \leq D \nonumber
\end{align}

  We are interested in efficiently solving the problem for a
  particular value of the changeout $D$  and also in computing
a compact representation of all solutions.

We reduce \eqref{deltastable} to  two simpler problems.
{\sc Best-increase} computes, for any value $x$, 
 the most beneficial total increase of $x$ to current sampling
 probabilities, i.e., that would bring the biggest 
reduction in the sum of variances.  Similarly, {\sc Best-decrease} computes the least
detrimental total decrease of $x$, i.e., the decrease that results in
the minimum increase of the sum of variances.  
Both problems can be formulated as convex programs.  We
show that descriptions of both functions can be found in time
 $O(n \log n)$.
 Last, we show how to fully describe the
 $\Delta$-stable PPS distributions and $\alpha$-stable distributions
 in time $O(n \log n)$ by using these functions. 

\begin{algorithm}
 \caption{The function $\Delta^+(y)$ and its inverse
   $\underline{\tau}(x)$ 
 for $\boldsymbol{w}$ and $\boldsymbol{p}$\label{deltaplus}}
\begin{algorithmic}[1]
\Procedure{Best-Increase}{$\boldsymbol{w},\boldsymbol{p}$}
\State $W\gets 0$  \Comment{weight of active items}
\State $D \gets 0$  \Comment{Contribution to difference of items
  that exited and negated sum of initial probabilities of active items}
\State $L \gets \emptyset$ \Comment{Initialize $L$ as an empty max-heap}
\For {$i\in [n]$}
\If {$(w_i>0)$}
\If  {$(p_i<1)$}
\State $(A.v,A.t,A.i)  \gets (\frac{w_i}{p_i},\text{"entry"},i)$\\
\Comment{value, breakpoint type, item}
\State $L \gets L \cup A$ \Comment{add entry record to $L$}
\ElsIf {$p_i=0$}  
\State $W\gets W+w_i$ 
\EndIf
\State $(A.v,A.t,A.i)  \gets (w_i,\text{"exit"},i)$\\
\Comment{value, breakpoint type, item}
\State $L \gets L \cup A$ \Comment{add exit record to $L$}
\EndIf
\EndFor
\State $A \gets $ {\sc GetMax}$(L)$ \\ \Comment{Pull Record with largest $A.v$ in
   $L$}
\If {$W>0$} 
\State $\Delta^+ \gets \{([A.v,\infty), \frac{W}{y})\}$ \\\Comment{``Rightmost''
  piece of Function
  $\Delta^+(y)$}
\State $\underline{\tau} \gets \{((0,\frac{W}{A.v}], \frac{W}{x})\}$
\\\Comment{``Leftmost'' piece of Function
  $\underline{\tau}(x)$}
\Else
\State $\Delta^+ \gets \emptyset$,  $\underline{\tau} \gets \emptyset$
\EndIf
\State $r \gets A.v$ \Comment{definition boundary so far of $\Delta^{+}$}
\State $\ell \gets \frac{W}{A.v}$  \Comment{definition boundary so far of $\underline{\tau}$}

\While {$L\not=\emptyset$}
\State $A \gets $ {\sc GetMax}$(L)$
\State $i\gets A.i$
\If {$A.t = \text{"entry"}$}
\State $W\gets W+w_i$
\State $D\gets D-p_i$
\Else \Comment{$A$ is an exit point}
\State $W\gets W-w_i$
\State $D\gets D+1$
\EndIf
\State $\Delta^+ \gets \Delta^+ \cup ([A.v,r],D+\frac{W}{y})$ \\\Comment{Function $\Delta^+$ from $A.v$ to next
  breakpoint}
\State $r \gets A.v$
\State $\underline{\tau} \gets \underline{\tau} \cup \{((\ell, D+\frac{W}{A.v}],  \frac{W}{x-D})\}$ \\\Comment{Function
  $\underline{\tau}$ from previous to current breakpoint}
\State $\ell \gets D+\frac{W}{A.v}$
\EndWhile
\\\Return {$\Delta^+$, $\underline{\tau}$ }
\EndProcedure
\end{algorithmic}
\end{algorithm}

\subsubsection*{The most beneficial increase}

 The most beneficial increase for $x\in (0,\sum_i (1-p_i)]$ is $\boldsymbol{\delta}$
 which solves the following convex program.
\begin{align} \label{xincrease}
\text{minimize}
\sum_i \frac{w_i^2}{p_i+\delta_i} \\
\forall i,\  1-p_i\geq \delta_i\geq 0 \nonumber\\
\sum_i \delta_i = x \nonumber
\end{align}

\begin{lemma} \label{inclemma}
For any increase $x\in (0,\sum_i (1-p_i)]$ there is a threshold value
$y\equiv \underline{\tau}(x)$
such that the solution of \eqref{xincrease} is
$\delta_i = \min\{1, \max\{p_i,w_i/y\}\} -p_i$.
The function $\underline{\tau}(x)$ and its inverse
$\Delta^{+}(y)$ are continuous and decreasing and piecewise with at
most $2n$ breakpoints, where each piece is a hyperbola.
Moreover, a representation of both functions can be computed in
$O(n\log n)$ time.
\end{lemma}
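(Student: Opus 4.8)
The plan is to characterize the optimizer of \eqref{xincrease} through its KKT conditions and then read off the structure of the two functions from that characterization. First I would change variables to $q_i = p_i + \delta_i$, turning \eqref{xincrease} into the minimization of the convex, separable objective $\sum_i w_i^2/q_i$ over the polytope $\{q \mid p_i \le q_i \le 1,\ \sum_i q_i = k_0 + x\}$, where $k_0 = \sum_i p_i$. Introducing a single multiplier $\lambda$ for the equality constraint and multipliers for the box constraints, stationarity on the interior gives $w_i^2/q_i^2 = \lambda$, i.e.\ $q_i = w_i/\sqrt{\lambda}$, while complementary slackness forces $q_i = p_i$ exactly when $w_i/\sqrt{\lambda} \le p_i$ and $q_i = 1$ exactly when $w_i/\sqrt{\lambda} \ge 1$. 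Writing $y = \sqrt{\lambda}$, this is precisely $q_i = \min\{1, \max\{p_i, w_i/y\}\}$, hence $\delta_i = \min\{1, \max\{p_i, w_i/y\}\} - p_i$, as claimed. Strict convexity of $w_i^2/q_i$ for each $i$ with $w_i>0$ gives uniqueness of the minimizer on the support, and the threshold $y = \underline{\tau}(x)$ is pinned down by imposing $\sum_i \delta_i = x$.

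Next I would study $\Delta^+(y) = \sum_i (\min\{1,\max\{p_i, w_i/y\}\} - p_i)$, the total increase induced by threshold $y$, which is the inverse of $\underline\tau$. The key observation is that each entry $i$ with $w_i>0$ contributes a term with exactly two breakpoints: as $y$ decreases, the entry switches from inactive ($q_i=p_i$) to active ($q_i=w_i/y$) at the entry breakpoint $y=w_i/p_i$, and from active to saturated ($q_i=1$) at the exit breakpoint $y=w_i$ (noting $w_i \le w_i/p_i$, with the entry breakpoint at $\infty$ when $p_i=0$). This yields at most $2n$ breakpoints overall. Between consecutive breakpoints the active set is fixed, so $\Delta^+(y)=W/y+C$ with $W=\sum_{\text{active}} w_i$ and $C=\sum_{\text{saturated}}(1-p_i)-\sum_{\text{active}} p_i$, a hyperbola; inverting gives $\underline\tau(x)=W/(x-C)$, again a hyperbola. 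Each term is continuous and non-increasing in $y$, so $\Delta^+$ is continuous and non-increasing, equal to $0$ for $y\ge \max_i w_i/p_i$ and tending to $\sum_{i:w_i>0}(1-p_i)$ as $y\to 0^+$; hence every target $x$ in the stated range is realized by some threshold.

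For the complexity claim I would sort the (at most) $2n$ breakpoints in decreasing order of $y$ and sweep them, maintaining the running active weight $W$ and the constant $C$ and emitting one hyperbolic piece per interval; this is exactly Algorithm~\ref{deltaplus}, whose sort costs $O(n\log n)$ and whose $2n$ heap extractions cost $O(\log n)$ each, for $O(n\log n)$ total.

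The step I expect to require the most care is the continuity statement for $\underline\tau$. On any interval of $y$ in which no entry is active (for instance when a heavy entry saturates well before a lighter entry becomes active) we have $W=0$, so $\Delta^+$ is locally constant; such a flat piece of $\Delta^+$ is a non-strictly-decreasing stretch and, read as a strict inverse, produces a jump in $\underline\tau$. The honest way to handle this is to phrase continuity and monotonicity in terms of the forward map $\Delta^+$, which is genuinely continuous and non-increasing (and strictly decreasing off the flat pieces), and then to observe that although the threshold representation is non-unique across a flat piece, all thresholds in that stretch induce the same allocation $\boldsymbol{\delta}$, so the optimal increase $\boldsymbol{\delta}$ and its cost are continuous functions of $x$. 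I would therefore take $\underline\tau$ to be the corresponding generalized inverse and note that the compact piecewise-hyperbolic description with at most $2n$ breakpoints is unaffected.
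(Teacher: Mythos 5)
Your proposal is correct and follows essentially the same route as the paper's proof: Kuhn--Tucker conditions yield the threshold form $q_i=\min\{1,\max\{p_i,w_i/y\}\}$, the breakpoints are the ratios $w_i/p_i$ and the weights $w_i$ (at most $2n$ of them), each piece of $\Delta^+$ is a hyperbola $W/y+C$ inverted piecewise to get $\underline{\tau}$, and a sort-and-sweep gives $O(n\log n)$, exactly as in Algorithm~\ref{deltaplus}. Your additional observation about intervals where the active set is empty (so that $\Delta^+$ is locally flat and $\underline{\tau}$ must be read as a generalized inverse) is a genuine edge case the paper's proof glosses over, and your resolution --- that the induced allocation $\boldsymbol{\delta}$ is still a continuous function of $x$ and the piecewise description is unaffected --- is the right way to handle it.
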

\begin{proof}
From convexity, the solution is unique and the
 Kuhn-Tucker conditions specify its form:
The partial derivatives
$$\frac{\partial \frac{w_i^2}{p_i+\delta_i}}{\partial
  \delta_i}=-\frac{w_i^2}{(p_i+\delta_i)^2}\ $$  are equal for all
$i$
such that $1-p_i > \delta_i > 0$. 
We denote this common value of $\frac{w_i}{p_i+\delta_i}$ by
$\underline{\tau}(x)$.
For $i$ where $\delta_i=0$, we have
$\frac{w_i}{p_i} \leq \underline{\tau}(x)$ and when $0<\delta_i=1-p_i$,
we have $w_i \geq \underline{\tau}(x)$.

 Using this, we can now consider the solution as a function of $x$.
We first sort all items 
with $0<p_i<1$ in decreasing order of $w_i/p_i$ and place items with 
$p_i=0$ and $w_i>0$ at  the head of the sorted list.  
The list is processed in order: we first
 increase $\delta_1$ until $\frac{w_1}{p_1+\delta_1}=\frac{w_2}{p_2}$.  We then
 increase both $\delta_1$ and $\delta_2$ so that the ratios are all equal until
 $\frac{w_1}{p_1+\delta_1}=\frac{w_2}{p_2+\delta_2}=\frac{w_3}{p_3}$ and so on (but we stop
 increasing a probability once it is equal to $1$).  At any point, the
 ratios $\frac{w_i}{p_i+\delta_i}$  on the prefix $1,\ldots,h$  we have processed are all
 equal to some value $y$  except for
 items where $p_i$ reached $1$.
The threshold $y\equiv\underline{\tau}(x)$ can be interpreted as the
rate of decrease of the 
sum of variances per unit increase in expected sample size after implementing optimally a total
increase of $x$.   It is a decreasing function of $x$ and
satisfies $y\in (w_{h+1}/p_{h+1},w_h/p_h]$, where $h$ is the last
processed entry on our sorted list.
 The new sampling probabilities $\boldsymbol{q}=\boldsymbol{p}+\boldsymbol{\delta}$ 
satisfy $q_i(y)=\min\{1, \max\{p_i,w_i/y\}\}$.
Observe that the probabilities $\boldsymbol{q}(y)$ are  PPS of size $\sum_{i=1}^h
p_i+x$ of the prefix of the list
$$(q_1,\ldots,q_h)\gets \text{pps}({\sum_{i=1}^h}
p_i+x,(w_1,\ldots,w_h))\ ,$$ and $y$ is the
 threshold value of these PPS
probabilities.  

Consider
the (at most $2n$) points defined by the ratios $w_i/p_i$ (when $p_i<1$
and $w_i>0$) and the
weights $w_i$.   Consider the relation between $x$ and $y=\underline{\tau}(x)$.
The function is a hyperbola $x=a+b/y$ for some
constants $a,b$ between consecutive breakpoints.

The inverse function of $\underline{\tau}(x)$, which we call
$\Delta^+(y)$, maps threshold values to probability increases, and
satisfies
\begin{align*}
 x=\Delta^+(y)& ={\sum_i} q_i(y)-p_i\\ 
  &={\sum_i} \min\{1, \max\{p_i,w_i/y\}\}-p_i\ .
\end{align*}
It is continuous and decreasing
with
range $[\min_{i | p_i<1} w_i,R)$, where 
$R=\infty$ if $\exists i$ with $w_i>0$ and $p_i=0$ and
$R=   \max_{i| p_i<1} \frac{w_i}{p_i}$ otherwise.  
It is piecewise 
where each piece is a hyperbola.

Algorithm~\ref{deltaplus} computes a representation of the functions
$\Delta^+(y)$ and  $\underline{\tau}(x)$ each as
a sorted list of pieces of the form $(I,f(y))$ where $I$ are
intervals that partition the domain and $f$ is the function for
values on that interval.
 For $\Delta^+(y)$, the pieces have the form $f(y)=a+b/y$.
% The domain of $\Delta^+$ is $[\min_{i|w_i>0} \frac{w_i}{p_i}, R$,
% where
% $R=\infty$ if $\exists i$ with $w_i>0$ and $p_i=0$ and
% $R=   \max_{i| p_i<1} \frac{w_i}{p_i}]$ otherwise.  The 
 % range is $(0,\sum_{i|w_i>0} (1-p_i)]$.  
The inverse function $\underline{\tau}(x)$ is simultaneously
produced by the algorithm
by inverting each piece $(I,a+b/y)$ to obtain a respective interval where end
point $v$ of $I$ is mapped to end point $f(v)=a+b/v$ (left end points are
mapped to right end points and vice versa).  
The inverse function on the corresponding interval is
 $\underline{\tau}(x)=f^{-1}(x)=\frac{b}{x-a}$.

For each $i$ with $w_i>0$ and $p_i<1$, Algorithm~\ref{deltaplus} creates an {\em entry
    point} with value $\frac{w_i}{p_i}$ and for each item with
  positive $w_i$, creates an exit point with value $w_i$.
These (at most $2n$) points are the breakpoints of $\Delta^+(y)$.
The breakpoints are processed  in order to
compute the functions $\Delta^+(y)$ and $\underline{\tau}$.
The computation is performed in linear time after sorting the
breakpoints.
\end{proof}

\begin{example}
Returning to our example (Example~\ref{ex1}), we consider the best
increase for $\boldsymbol{w}$ and the uniform $\boldsymbol{p}$.  Since probabilities
are uniform, 
 Algorithm~\ref{deltaplus} processes 
entries in order of decreasing weight.
The first entry on our list is entry $5$, which has weight of $6$.
The best increase is applied 
 only to this entry until the ratio matches the ratio of the
second entry on the list, which happens when 
 $6/(1/3+\delta)=5/(1/3)$, that is,  when $\delta=1/15$.
So the first piece of our function is for the interval $x\in (0,1/15]$.  
The corresponding distribution has entry $5$ with probability $1/3+x$
(and all other entries with the original $1/3$).   The algorithm has
active weight $W=6$ and thus produces the 
threshold function $\underline{\tau}=6/(1/3+x)$.
We can verify that the inclusion probability for changeout $x$ is $q_5=6/\underline{\tau}(x)=1/3+x$.
 The second piece of the function associates increases for both entry $5$, and the second entry on our list, which is entry $4$.  They are increased until the ratios are equal to the third entry on our list, which is entry $2$.  This happens when the increase to entry $5$ is $1/6$, according to $6/(1/3+\delta)=4/(1/3)$ and
when the increase to entry $4$ is $1/12$ according to $5/(1/3+\delta)=4/(1/3)$. The total
increase from entries $\{4,5\}$ is $1/6+1/12=1/4$ and we obtain that the second piece of the function is for $x\in (1/15,1/4)$, and both entries $\{4,5\}$ are increased.
 Algorithm~\ref{deltaplus} computes 
the active weight  $W=6+5$ and the
threshold $\underline{\tau}(x)=11/(x+2/3)$.  Thus for an increase of $x$, entry $5$ is sampled with probability $6/\underline{\tau}(x)=(6/11)(x+2/3)$ and entry $4$ with probability
$5/\underline{\tau}(x)=(6/11)(x+2/3)$.  The third piece would have $3$ active entries $\{2,4,5\}$ (the three largest entries.).  The active weight is $W=15$ and the threshold is accordingly 
$\underline{\tau}(x)=15/(x+1)$.  The interval for this third piece is $(1/4,2/3]$.
\end{example}

 % In Appendix~\ref{app:pps}, we complete the proof of
 % Theorem~\ref{thm:pps} by considering the corresponding problem of
 % finding the least detrimental decrease in sampling probabilities. 
 % We show that descriptions of both functions can be found in time
 % $O(n \log n)$, and give illustrative pseudocode. 
 % Last, we show how to fully describe the
 % $\Delta$-stable PPS distributions and $\alpha$-stable distributions
 % in time $O(n \log n)$ by using  these functions. 

 % \section{PPS Sampling}
 % \label{app:pps}
 % In Section~\ref{sec:pps}, we described how to compute the most
 % beneficial increase of the current probabilities. 
\subsubsection*{The least detrimental decrease}
We next address the complementary problem, of how to compute the best
decrease to the probabilities. 
For $x\in [0,\sum_i p_i]$, it is the solution of the optimization problem
\begin{align} \label{xdecrease}
\text{minimize}
\sum_{i| w_i>0} \frac{w_i^2}{p_i-\delta_i} \\
\forall i,\  p_i\geq \delta_i\geq 0 \nonumber\\
\sum_i \delta_i =  x \nonumber
\end{align}

%\begin{trivlist}
%\item
We first observe that 
any total decrease $x\leq D_0$,  where
 $D_0=\sum_{i|w_i=0} p_i$ can be arbitrarily applied to
probabilities of items with zero weight and does not increase the
sum of variances.  

\begin{lemma} \label{declemma}
For any decrease $x\in (D_0,\sum_i  p_i]$,    there is a  threshold  
$y\equiv \overline{\tau}(x)$ such that    the solution of \eqref{xdecrease}
has the form  $\delta_i=p_i-\min\{p_i,w_i/y\}$. 
%\item
The function $\overline{\tau}(x)$   and its inverse $\Delta^-(y)$ are
increasing, continuous, and piecewise with at most $n$ breakpoints where each
piece is a hyperbola.
Moreover, a representation of both functions can be computed in
$O(n\log n)$ time.
\end{lemma}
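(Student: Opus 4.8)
The plan is to mirror the argument for Lemma~\ref{inclemma}, exploiting the fact that \eqref{xdecrease} is again a convex program (each term $w_i^2/(p_i-\delta_i)$ is convex in $\delta_i$ on the feasible range, since $w_i>0$). Convexity gives a unique optimum whose structure is pinned down by the Kuhn-Tucker conditions. First I would compute the partial derivatives $\partial/\partial\delta_i\,[w_i^2/(p_i-\delta_i)] = w_i^2/(p_i-\delta_i)^2$, which must all equal the common Lagrange multiplier on the coordinates that are strictly interior, i.e. those with $0<\delta_i<p_i$. Writing this common value as $y^2$ and setting $\overline{\tau}(x)=y$, the interior coordinates satisfy $w_i/(p_i-\delta_i)=y$, hence $\delta_i=p_i-w_i/y$.

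The key boundary observation is that, because $w_i>0$, the term $w_i^2/(p_i-\delta_i)$ diverges as $\delta_i\uparrow p_i$, so the upper constraint $\delta_i\le p_i$ is never active at the optimum; only the lower boundary $\delta_i=0$ can bind. There complementary slackness reads $w_i^2/p_i^2\ge y^2$, i.e. $w_i/p_i\ge y$, so the items with the largest ratio $w_i/p_i$ are left untouched. Combining the interior and lower-boundary cases yields exactly $\delta_i=\max\{0,\,p_i-w_i/y\}=p_i-\min\{p_i,w_i/y\}$, as claimed. The zero-weight items do not appear in the objective and can absorb the first $D_0=\sum_{i|w_i=0}p_i$ units of decrease at no cost, which is why the statement begins at $x>D_0$.

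Next I would trace the optimum as $x$ grows. Sorting the positive-weight items in increasing order of $w_i/p_i$ (the marginal cost $w_i^2/(p_i-\delta_i)^2$ evaluated at $\delta_i=0$), the least detrimental decrease first lowers the smallest-ratio item, then, once its marginal cost catches up, decreases it together with the next, keeping all active marginal costs equal at the common value $y^2$. Thus $y=\overline{\tau}(x)$ is increasing in $x$, and for a fixed active set $A$ the constraint $\sum_{i\in A}(p_i-w_i/y)=x$ gives $x=P_A-W_A/y$ with $P_A=\sum_{i\in A}p_i$ and $W_A=\sum_{i\in A}w_i$; inverting, $\overline{\tau}(x)=W_A/(P_A-x)$, a hyperbola on each interval between breakpoints. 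A breakpoint occurs precisely when $y$ reaches some ratio $w_j/p_j$ and item $j$ joins the active set; since each positive-weight item joins exactly once and never leaves (there is no upper-bound ``exit'' event, in contrast to the cap at $1$ that produced the second $n$ breakpoints in Lemma~\ref{inclemma}), there are at most $n$ breakpoints. Continuity of $\overline{\tau}$ and of its inverse $\Delta^-(y)=\sum_i(p_i-\min\{p_i,w_i/y\})$ follows from the pieces agreeing at the shared breakpoints, and the range works out to $(D_0,\sum_i p_i]$ by checking the two endpoints $y\to\min_{i|w_i>0} w_i/p_i$ and $y\to\infty$.

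Finally, the representation is produced exactly as in Algorithm~\ref{deltaplus}: sort the items by $w_i/p_i$ in $O(n\log n)$, then sweep once, maintaining the running sums $W_A$ and $P_A$ and emitting one hyperbolic piece $W_A/(P_A-x)$ per interval, inverting each piece in the same pass to obtain $\Delta^-$, for total time $O(n\log n)$. I expect the only delicate points to be (i) verifying rigorously that the upper bound $\delta_i\le p_i$ never binds, so that the $2n$-breakpoint behaviour of the increase problem collapses to $n$ here, and (ii) the careful bookkeeping of the zero-weight offset $D_0$ and the behaviour at the endpoints of the range $(D_0,\sum_i p_i]$.
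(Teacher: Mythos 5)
Your proposal is correct and follows essentially the same route as the paper's proof: KKT conditions of the convex program yield the threshold form $\delta_i=\max\{0,p_i-w_i/y\}$, and sorting the positive-weight items by $w_i/p_i$ and sweeping gives the piecewise-hyperbolic representation of $\overline{\tau}$ and $\Delta^-$ in $O(n\log n)$ time with at most $n$ breakpoints. Your explicit justification that the upper constraint $\delta_i\le p_i$ never binds for $w_i>0$ (divergence of the objective) is a detail the paper leaves implicit, and your monotonicity direction (increasing) matches the lemma statement.
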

\begin{proof}
For  any decrease $x\in (D_0,\sum_i  p_i]$,   the solution is unique and
specified by the Kuhn-Tucker conditions.
For all $i$ where $w_i=0$ we
must have $\delta_i=p_i$.
For all $i$ such that
$\delta_i \in (0,p_i)$, the partial derivatives 
$$\frac{\partial \frac{w_i^2}{p_i-\delta_i}}{\partial
  \delta_i}=-\frac{w_i^2}{(p_i- \delta_i)^2} $$ are equal and 
we denote the common value of
$\frac{w_i}{p_i- \delta_i}$ by $\overline{\tau}(x)$.
When $\delta_i=0$, we must have  $\frac{w_i}{p_i} \geq \overline{\tau}(x)$.

The solution thus has the form
$$\delta_i = p_i- \min\{p_i, w_i/y\}= \max \{0, p_i-w_i/y\}\ .$$
We obtain the relation
$$x=\sum_i \delta_i = \sum_{i | w_i/y < p_i} (p_i - w_i/y)\ ,$$ which defines
the functions
$y\equiv \overline{\tau}(x)$  and its inverse
$x\equiv \Delta^-(y)$.
Both $\overline{\tau}(x)$ and its inverse
$\Delta^-(y)$ 
are continuous and decreasing.  The range
of $\overline{\tau}(x)$ (domain of $\Delta^-(y)$) is
$[\min_{i | p_i>0} w_i/p_i,\infty)$. 
Both are piecewise  with
 at most $n$ breakpoints which correspond to the values of the 
 ratios $w_i/p_i$.
Algorithm~\ref{deltaminus} computes a representation of
these two  functions.
 It first  orders the items as in Algorithm~\ref{deltaplus} and
 the proof of Lemma~\ref{inclemma}, according to the ratios $w_i/p_i$
 and processes this list in reverse order, generating the pieces of the two functions.
Each function is represented
as  a list of
consecutive pieces  of the form
$(I,f(y))$, where $I$ is an interval.  For convenience, we define $\overline{\tau}(x)=0$  for $x
\leq  D_0$. 
For  $\Delta^-(y)$, $f(y)$ has the form $f(y)=a-b/y$.  
The algorithm inverts $f(y)$ to obtain the corresponding piece of the function $\overline{\tau}(x)$:
an end point $v$ of $I$ is mapped to end point
$a-b/v$ (left end points are mapped to right end points and vice
versa) and the function is $b/(a-x)$.
% The intervals are computed
% consecutively as a partition of the domain in left to right order:
% When the algorithms produces a piece for $I=[a,b)$, the function is
% previously computed for all $y\leq a$.
% The domain of the function is $(D_0,\sum_i p_i)$ and the range
% $(\min_{i|w_i>0,p_i>0} w_i/p_i,\infty)$.

\eat{
Items are ordered as in the proof of Lemma~\ref{inclemma},  defining
at most $n$ breakpoints.  We then 
scan the list in reverse order.  We first increase $\delta_n$
 while $\frac{w_n}{p_n-\delta_n}\leq \frac{w_{n-1}}{p_{n-1}}$,
then similarly increase both $\delta_n$ and $\delta_{n-1}$ while
$\frac{w_n}{p_n-\delta_n} = \frac{w_{n-1}}{p_{n-1}-\delta_{n-1}}\leq 
\frac{w_{n-1}}{p_{n-2}-\delta_{n-2}}$ and so on.

For a given $x$, we can consider the suffix $(t,\ldots,n)$ of
  items with probabilities modified by this process.
 Items with $w_i=0$, if there are any, are at the tail of the list,
 and clearly, for these items, 
we  can increase $\delta_i$ (decrease inclusion probabilities) without increasing variance.  
Thus when the decrease $x \leq D_{o} \equiv \sum_{i|w_i=0} p_i$,
we only modify probabilities of items with zero weights.
When $x > D_0$, then all items with $w_i=0$ have zero
probabilities and
all items in the suffix with $w_i>0$ have the same ratio
$w_i/p_i$ which is  equal to some $y  \in
[w_t/p_t, w_{t-1}/p_{t-1})$.
We therefore have
$\delta_i= p_i-\min\{p_i, w_i/y\}$.
}
The probabilities
$\boldsymbol{q}\gets\boldsymbol{p}-\boldsymbol{\delta}$
computed for a given $x$ are PPS with threshold is $y$:
$$(q_t,\ldots,q_n) \gets \text{pps}(\sum_{i=t}^n p_i -
x,(w_t,\ldots,w_n))\  .$$
\end{proof}

\begin{algorithm}[t] 
 \caption{The function $\Delta^-(y):(0,\infty)$ and its
   (extended) inverse
   $\overline{\tau}:(0,\sum_i p_i)$ for $\boldsymbol{w}$ and $\boldsymbol{p}$\label{deltaminus}}
\begin{algorithmic}[1]
\Procedure{Best-Decrease}{$\boldsymbol{w},\boldsymbol{p}$}
 \State $W\gets 0$  \Comment{Initial active weight $W$}
 \State $D \gets 0$  
 \State $\Delta^{-}\gets \emptyset$, $\overline{\tau} \gets \emptyset$
\For {$i\in [n]$}
 \If {$(w_i>0)$} 
 \State $(A.v,A.i)  \gets (\frac{w_i}{p_i},i)$
 \Comment{value, item}
 \State $L \gets L \cup A$ \Comment{add entry record to $L$}
 \Else $,\ D\gets D+p_i$    \Comment{$w_i=0$}
 \EndIf
 \EndFor
 \State $\overline{\tau} \gets ((0,D],0) $ \Comment{Threshold function is
   $0$ for $y\leq D$}
\State $\ell=D$  \Comment{Left boundary of current definition of
  $\overline{\tau}$}
\State $r=\infty$ \Comment{Right boundary of current definition of $\Delta^{-1}$}
\While {$L\not=\emptyset$}
 \State $A \gets $ {\sc GetMin}$(L)$ \\\Comment{Pull out record with minimum $A.v$}
 \State $i\gets A.i$
 \State $W\gets W+w_i$
 \State $D\gets D+p_i$
 \State $\Delta^{-} \gets \Delta^{-} \cup  ([A.v,r), D-\frac{W}{y})$
\State $r \gets D-\frac{W}{A.v}$
 \State $\overline{\tau} \gets \overline{\tau} \cup ((\ell, D-\frac{W}{A.v}],
 \frac{W}{D-x}) $
\State $\ell \gets D-\frac{W}{A.v}$
 \EndWhile
\EndProcedure
\\\Return {$\Delta^-$, $\overline{\tau}$ }
\end{algorithmic}
\end{algorithm}

\begin{example}
We compute the best decrease for Example~\ref{ex1}, as computed by
 Algorithm~\ref{deltaminus}. We obtain that for $x\in (0,1/3]$ we only decrease the
probability for entry $6$, which has weight of $0$.  We then decrease only the second
smallest entry, which is entry $3$ with weight $1$.  The interval of the second piece 
is accordingly
$x\in (1/3,1/2]$, we have active weight $1$, $\overline{\tau}(x)=
1/(2/3-x)$, and $q_6=0$ and $q_2=1/\overline{\tau}(x)=2/3-x$.  The third piece $x\in (1/2,2/3]$
involves entries $\{1,3\}$.  The active weight is $W=3$ and $\overline{\tau}(x)=\frac{3}{1-x}$.
Accordingly, the probabilities are
$q_6=0$, $q_2= 2/\overline{\tau}(x)$, and $q_1=1/\overline{\tau}(x)$.
\end{example}

 \subsubsection*{Computing a $\Delta$-stable PPS distribution}
  We consider computing
$\Delta\text{-{\sc Opt}}(\boldsymbol{w},\boldsymbol{p},D)$, which is the distribution with 
maximum fitness (minimum variance) $\phi(\boldsymbol{q},\boldsymbol{w})$, amongst $\boldsymbol{q}$ that are within distance $D$ from
$\boldsymbol{p}$ ($\|\boldsymbol{p}-\boldsymbol{q}\|_1 \leq D$).
The applicable values of $D$ are in
the interval  $[0, \|\text{pps}(k,\boldsymbol{w})-\boldsymbol{p}\|_1]$
and clearly
$\Delta\text{-{\sc Opt}}(\boldsymbol{w},\boldsymbol{p},0)=\boldsymbol{p}$ and
$\Delta\text{-{\sc Opt}}(\boldsymbol{w},\boldsymbol{p},\|\text{pps}(k,\boldsymbol{w})-{\boldsymbol{p}}\|_1)=\text{pps}(k,\boldsymbol{w})$.
For general $D$, we obtain the distribution by combining the best
increase of $D/2$ and the best decrease of
$D/2$. The two are disjoint for $D\leq \|\text{pps}(k,\boldsymbol{w})-\boldsymbol{p}\|_1$.
Pseudocode is provided in Algorithm~\ref{optdelta}.

\begin{example}
On Example~\ref{ex1}, the optimal PPS has changeout $4/3$ and thus we are interested in
the  $\Delta$-stable distribution for $D< 4/3$.  We do this by combining the best increase
of $D/2$, which affects a prefix of the entries $\{5,4,2\}$ with the best decrease of $D/2$, 
which affects a prefix of the entries $\{6,3,1\}$.
If $D=1$, the best increase of $x=0.5$ involves all three entries.  We have $\underline{\tau}=7.5$
and accordingly the entries $\{5,4,2\}$, which have weights $\{6,5,4\}$, have
$q_5=6/7.5= 0.8$, $q_4=5/7.5=2/3$, and $q_2=4/7.5=8/15$.  The best decrease
of $x=0.5$ has $\overline{\tau}=6$.  Accordingly, $q_6=0$, $q_3=1/6$, and $q_1=2/6=1/3$.
Therefore, the $\Delta$-stable distribution for $D=1$ is
$$\boldsymbol{q}=\bigg(\frac{1}{3},\frac{8}{15},\frac{1}{6},\frac{2}{3},\frac{4}{5},0\bigg)\
.$$
\eat{
% The functions \underline{\tau}:  (0,1/15): 6/(1/3+x) ; (1/15,1/4):
% 11/(2/3+x) ; (1/4,2/3):  15/(1+x)
% The function \overline{\tau}:  (0,1/3) 0 ;  (1/3,1/2):  1/(2/3-x) ;
% (1/2,2/3):  3/(1-x)

gnuplot
set terminal postscript eps color 24
set output "thresholds.eps"
set xlabel 'changeout'
set ylabel 'threshold value'
plot [0:0.666] (x< 0.0667) ? 6/(0.333+x) : x<0.25 ? 11/(0.667+x)
:15/(1+x) title 'x increase threshold' lw 3, (x< 0.0333) ? 0 : x<0.5 ? 
1/(0.667-x) : 3/(1-x) lw 3 title 'x decrease threshold', 9 lw 2 title 'pps threshold'

}
Figure~\ref{threshex:assign}  illustrates the thresholds
$\overline{\tau}$ and $\underline{\tau}$ as a function of the
changeout $x$.  When the changeout is $D=4/3$, the increase and
decrease are $D/2=2/3$ and 
both thresholds
meet the PPS threshold, reflecting the fact that the
$\Delta$-stable solution is the best-fit PPS \eqref{bestfitex}.
\end{example}

\begin{figure}[t]
\centering
 \includegraphics[width=0.3\textwidth]{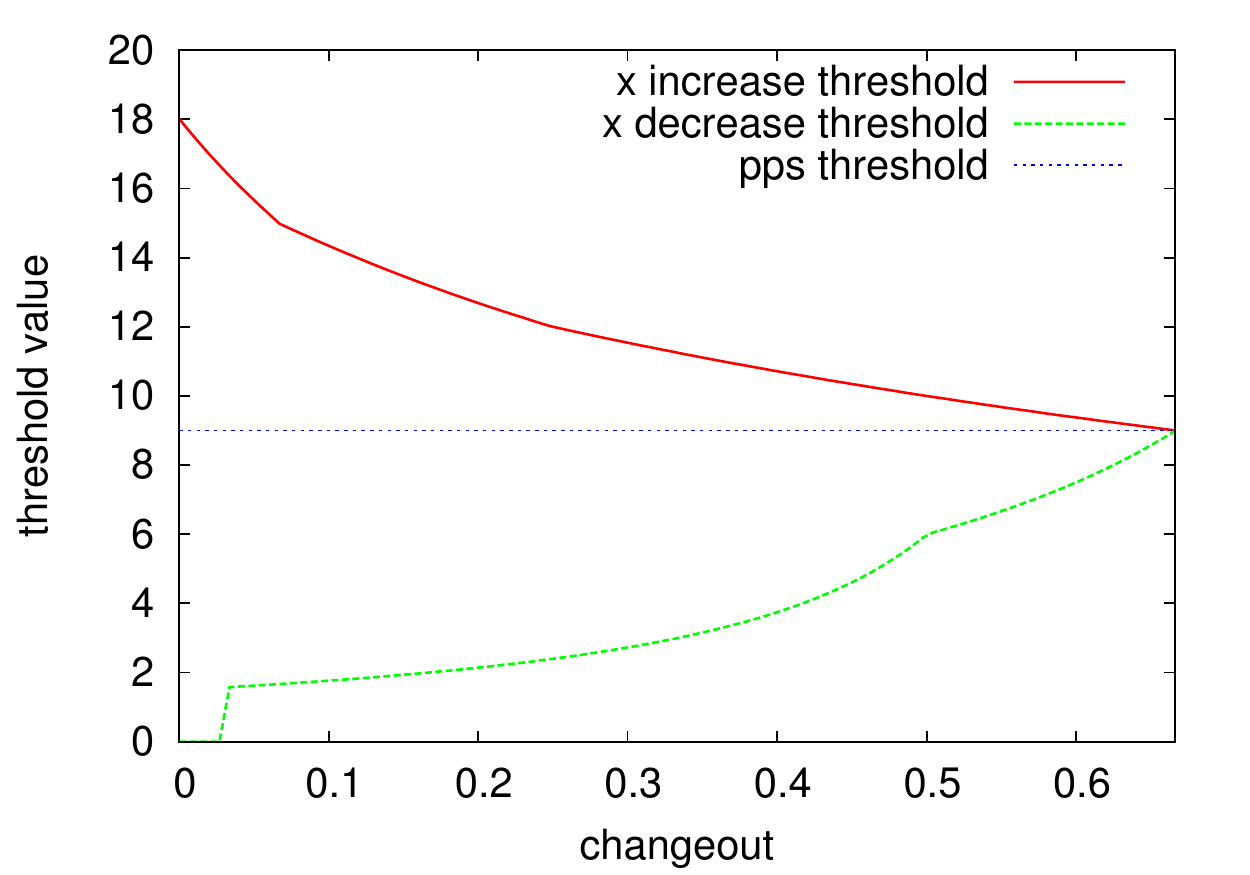} 
\caption{The thresholds $\underline{\tau}$ and $\overline{\tau}$ as a
  function of changeout $x$ for Example~\ref{ex1}.  PPS threshold with
  maximum changeout ($4/3$) is provided as a reference.\label{threshex:assign}}
\end{figure}

\begin{lemma}
A representation of  $\Delta\text{-{\sc
    Opt}}(\boldsymbol{w},\boldsymbol{p},D)$ 
 can be computed 
in $O(n\log n)$ time.  The representation 
supports  computing
the output distribution for a particular value $D$ in time linear in the
number of entries with $p_i\not= q_i$.
\end{lemma}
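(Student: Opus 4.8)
The plan is to show that the representation of $\Delta\text{-{\sc Opt}}(\boldsymbol{w},\boldsymbol{p},\cdot)$ is obtained essentially for free from the two piecewise functions already built in Lemmas~\ref{inclemma} and~\ref{declemma}, so that the only substantive work is justifying the clean decomposition of \eqref{deltastable} into an independent best increase and best decrease, each of magnitude $D/2$. First I would reduce the feasible set. Since $\sum_i p_i=\sum_i q_i=k$, every feasible $\boldsymbol{q}$ has equal total increase and total decrease; writing $x$ for this common value gives $\|\boldsymbol{p}-\boldsymbol{q}\|_1=2x$, so the constraint $\|\boldsymbol{p}-\boldsymbol{q}\|_1\le D$ is exactly $x\le D/2$. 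It therefore suffices to understand, for each fixed $x$, the cheapest feasible $\boldsymbol{q}$ whose increase mass and decrease mass are both $x$.

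Next I would establish a decomposition identity. For feasible $\boldsymbol{q}$, set $q^+_i=\max\{q_i,p_i\}$ and $q^-_i=\min\{q_i,p_i\}$; a coordinatewise check gives $\sum_i w_i^2/q_i=\sum_i w_i^2/q^+_i+\sum_i w_i^2/q^-_i-\sum_i w_i^2/p_i$. Because $\boldsymbol{q}^+$ is an increase-only perturbation of $\boldsymbol{p}$ of total mass $x$ and $\boldsymbol{q}^-$ a decrease-only one of total mass $x$, Lemmas~\ref{inclemma} and~\ref{declemma} bound the first two terms below by the objective values of \textsc{Best-Increase}$(x)$ and \textsc{Best-Decrease}$(x)$. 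Hence the objective of \eqref{deltastable} at any feasible $\boldsymbol{q}$ with increase mass $x$ is at least the value attained by applying both optimal one-sided moves, and this bound is attained provided the increase and the decrease act on disjoint coordinate sets, so that they compose into a single feasible vector without one undoing the other.

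The main obstacle is to verify two coupled facts simultaneously: that the optimum uses the full budget $x=D/2$, and that at this $x$ the two one-sided moves are disjoint. Both follow from comparing thresholds. By Lemma~\ref{inclemma} the increase touches precisely the coordinates with $w_i/p_i>\underline{\tau}(x)$ and by Lemma~\ref{declemma} the decrease touches those with $w_i/p_i<\overline{\tau}(x)$, so disjointness is equivalent to $\underline{\tau}(x)\ge\overline{\tau}(x)$. Since $\underline{\tau}$ decreases from $\max_i w_i/p_i$ while $\overline{\tau}$ increases from $\min_i w_i/p_i$, and the two curves meet exactly at the PPS threshold when $2x=\|\text{pps}(k,\boldsymbol{w})-\boldsymbol{p}\|_1$, disjointness holds throughout the admissible range $D\le\|\text{pps}(k,\boldsymbol{w})-\boldsymbol{p}\|_1$. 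The same inequality forces $x=D/2$: differentiating the combined objective along a piece shows its marginal change in $x$ equals $\overline{\tau}(x)^2-\underline{\tau}(x)^2$, which is strictly negative for $x<\|\text{pps}(k,\boldsymbol{w})-\boldsymbol{p}\|_1/2$ and zero at the PPS point, so the combined objective strictly decreases in $x$ and the constraint $x\le D/2$ is tight (consistent with the boundary cases $D=0$ and $D=\|\text{pps}(k,\boldsymbol{w})-\boldsymbol{p}\|_1$ noted before the lemma).

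It then remains to read off the complexity. The representation of $\Delta\text{-{\sc Opt}}(\boldsymbol{w},\boldsymbol{p},\cdot)$ is just the pair of piecewise-hyperbolic functions $\underline{\tau},\overline{\tau}$, each constructed in $O(n\log n)$ time by Algorithms~\ref{deltaplus} and~\ref{deltaminus}, with coordinates kept in sorted order of $w_i/p_i$. To answer a query for a given $D$ I would locate the pieces of $\underline{\tau}$ and $\overline{\tau}$ at $x=D/2$ by binary search over the $O(n)$ breakpoints, obtaining both thresholds, and then enumerate the changed entries as the prefix with $w_i/p_i>\underline{\tau}(D/2)$ and the suffix with $w_i/p_i<\overline{\tau}(D/2)$, setting $q_i=\min\{1,\max\{p_i,w_i/\underline{\tau}(D/2)\}\}$ and $q_i=\min\{p_i,w_i/\overline{\tau}(D/2)\}$ respectively in $O(1)$ per entry. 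This gives output time linear in the number of coordinates with $p_i\ne q_i$ (up to the $O(\log n)$ search), matching Algorithm~\ref{optdelta}.
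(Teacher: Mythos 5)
Your proposal is correct, and it reaches the same representation and query procedure as the paper (the pair of piecewise-hyperbolic threshold functions from Algorithms~\ref{deltaplus} and~\ref{deltaminus}, combined as in Algorithm~\ref{optdelta}). The difference is in what is actually argued. The paper's proof of this lemma is purely about running time: it takes as given the assertion, stated in the surrounding text, that $\Delta\text{-{\sc Opt}}(\boldsymbol{w},\boldsymbol{p},D)$ is obtained by composing the best increase of $D/2$ with the best decrease of $D/2$, and that these are disjoint for $D\leq\|\text{pps}(k,\boldsymbol{w})-\boldsymbol{p}\|_1$. You supply the missing correctness argument, and you do it cleanly: the observation that $\sum_i q_i=\sum_i p_i$ forces equal increase and decrease mass $x$ with $\|\boldsymbol{p}-\boldsymbol{q}\|_1=2x$; the coordinatewise identity $\sum_i w_i^2/q_i=\sum_i w_i^2/q_i^++\sum_i w_i^2/q_i^--\sum_i w_i^2/p_i$ reducing \eqref{deltastable} to the two one-sided programs \eqref{xincrease} and \eqref{xdecrease}; disjointness of their supports via $\underline{\tau}(x)\geq\overline{\tau}(x)$ on the admissible range; and tightness of the budget via the marginal rate $\overline{\tau}(x)^2-\underline{\tau}(x)^2<0$ (which is exactly the paper's later formula for $\alpha(D)$). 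These checks are all correct. Your query procedure is also slightly more careful than the paper's pseudocode: Algorithm~\ref{optdelta} as written loops over all $n$ entries, whereas enumerating the prefix and suffix in ratio order, as you describe, is what actually delivers time linear in the number of modified entries (the extra $O(\log n)$ for locating the piece is harmless and implicit in the paper as well). In short, your write-up proves strictly more than the paper's does for this lemma; the only thing to trim is that the decomposition argument, while valuable, belongs logically to the setup preceding the lemma rather than to the complexity claim itself.
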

\begin{proof}
Algorithm~\ref{optdelta} computes 
$\Delta\text{-{\sc Opt}}(\boldsymbol{w},\boldsymbol{p},D)$ in $O(n)$ time after computing
the  thresholds $\overline{\tau}(x)$ and $\underline{\tau}(x)$ for 
$x=D/2$, using
Algorithms~\ref{deltaplus} and~\ref{deltaminus}.
If we are interested in a particular $D$, the algorithms can be stopped once the piece of the function
containing $x=D/2$ is computed.  This takes $O(n\log n)$ time of Algorithms~\ref{deltaplus} and~\ref{deltaminus}.
The complete representation is the
combination of the representations of $\underline{\tau}(x)$ and
$\overline{\tau}(x)$.  If it is precomputed, Algorithm~\ref{optdelta}
returns the distribution in time linear in the number of modified entries.
\end{proof}

\begin{algorithm}[t]  \caption{Compute the distribution $\Delta$\text{-{\sc
        Opt}}$(\boldsymbol{w},\boldsymbol{p},D)$ \label{optdelta}}
 \begin{algorithmic}
 \Function {Opt}{$\boldsymbol{w},\boldsymbol{p},D$}
\State $x \gets D/2$
\State $\underline{T} \gets \underline{\tau}(x)$ \label{lineutau}
\State $\overline{T} \gets \overline{\tau}(x)$ \label{lineotau}
\If {$\overline{T}=0$} $R\gets x$ \EndIf
\For {$i\in [n]$}
\If {$p_i\leq  \frac{w_i}{\underline{T}}$}
\State  $q_i \gets  \frac{w_i}{\underline{T}}$
\ElsIf   {$\overline{T}=0$}
\If {$R>0$ and $w_i=0$}
\State $q_i \gets \max\{0,p_i-R\}$
\State $R \gets R-p_i$
\EndIf
\ElsIf {$p_i <  \frac{w_i}{\overline{T}}$} \Comment{$\overline{T}>0$ and
  $p_i \in [\frac{w_i}{\overline{T}},
  \frac{w_i}{\underline{T}})$}
\State $q_i \gets  \frac{w_i}{\overline{T}}$
\Else \Comment{$\overline{T}>0$, 
  $p_i \in [\frac{w_i}{\overline{T}}, \frac{w_i}{\underline{T}})$}
\State $q_i \gets p_i$
\EndIf
 \EndFor
\\\Return $\boldsymbol{q}$
 \EndFunction
\end{algorithmic}
\end{algorithm}

\subsubsection*{Computing an $\alpha$-stable PPS distribution}

 For a given stable solution, we can consider its changeout $D$ from
 the starting distribution and the Lagrange multiplier $a$, which
is  the rate of decrease of the sum of variances per unit change after implementing a
total change of $D$ in sampling probabilities.  The relation between
the two is
$$\alpha(D)\equiv a =\underline{\tau}(D/2)^2-\overline{\tau}(D/2)^2
.$$ 

\begin{lemma}
A description of the function $\alpha(D)$, and a solution for an equation
$\alpha(D)=a$,  can be computed in $O(n\log n)$ time.
\end{lemma}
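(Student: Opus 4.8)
The plan is to build the description of $\alpha(D)$ directly on top of the two piecewise representations already produced by Lemmas~\ref{inclemma} and~\ref{declemma}. First I would run \textsc{Best-Increase} (Algorithm~\ref{deltaplus}) and \textsc{Best-Decrease} (Algorithm~\ref{deltaminus}) to obtain $\underline{\tau}(x)$ and $\overline{\tau}(x)$ as sorted lists of hyperbolic pieces, each of the form $y=a+b/x$ up to sign, in $O(n\log n)$ time and with $O(n)$ pieces in total. Substituting $x=D/2$ is an affine reparametrization: it leaves every piece hyperbolic in $D$ and merely doubles the breakpoint locations, so I obtain $\underline{\tau}(D/2)$ and $\overline{\tau}(D/2)$ in the same closed form at no extra asymptotic cost.

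Next, on any fixed interval of $D$ each threshold has the form $\text{const}+\text{const}/D$, hence its square has the form $c_0+c_1/D+c_2/D^2$, and therefore the difference $\alpha(D)=\underline{\tau}(D/2)^2-\overline{\tau}(D/2)^2$ is again of this shape. I would form the common refinement of the two breakpoint sequences by merging the two already-sorted lists (an $O(n)$ merge), yielding $O(n)$ sub-intervals on which \emph{both} thresholds are single hyperbolas simultaneously; on each such piece I compute the three coefficients $c_0,c_1,c_2$ from the stored piece parameters in $O(1)$. This produces a complete $O(n)$-size description of $\alpha(D)$ as a piecewise rational function in $1/D$, in $O(n\log n)$ total time. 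I would also record the value of $\alpha$ at every breakpoint and restrict attention to the admissible range $D\in[0,\|\text{pps}(k,\boldsymbol{w})-\boldsymbol{p}\|_1]$.

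To solve an equation $\alpha(D)=a$ I would exploit monotonicity. Since $\underline{\tau}$ is decreasing and $\overline{\tau}$ is increasing in the change, the term $\underline{\tau}(D/2)^2$ decreases while $\overline{\tau}(D/2)^2$ increases with $D$, so $\alpha(D)$ is strictly decreasing; this is precisely the ``derivative of a concave tradeoff'' statement from the model section. Consequently the recorded breakpoint values are sorted, and a binary search over the $O(n)$ pieces locates in $O(\log n)$ time the unique piece whose $\alpha$-range brackets $a$. On that piece I solve $c_0+c_1/D+c_2/D^2=a$, which is a quadratic in the variable $t=1/D$, and keep the root lying inside the piece's interval; this costs $O(1)$. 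The full description is therefore $O(n\log n)$ and each inversion query is $O(\log n)$, matching the claim.

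The main obstacle I anticipate is bookkeeping rather than mathematics: I must check that squaring and subtracting the two hyperbolas preserves both the closed form and the monotonicity I rely on, and I must handle the degenerate pieces carefully --- in particular the leftmost decrease regime where $\overline{\tau}\equiv 0$ (the zero-weight prefix of mass $D_0$), the pieces where an inclusion probability has saturated at $1$ on the increase side, and the alignment of the two domains so that the merged partition covers exactly the admissible interval of $D$. Verifying that the common refinement still has only $O(n)$ pieces (at most $2n$ from $\underline{\tau}$ together with $n$ from $\overline{\tau}$) and that each quadratic admits a unique admissible root on its piece are the points that need care.
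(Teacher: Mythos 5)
Your overall strategy matches the paper's: take the piecewise representations of $\underline{\tau}$ and $\overline{\tau}$ from Lemmas~\ref{inclemma} and~\ref{declemma}, merge their breakpoints to get a piecewise rational, decreasing function $\alpha(D)$, locate the piece containing $a$ by monotonicity, and solve a polynomial equation within that piece. The monotonicity argument and the $O(n\log n)$ accounting are fine.

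There is, however, a concrete algebraic error in the step you yourself flagged as needing care. The pieces of the threshold functions are not of the form $\text{const}+\text{const}/x$; from Algorithms~\ref{deltaplus} and~\ref{deltaminus} they have the form $\underline{\tau}(x)=\frac{W}{x-D_0}$ and $\overline{\tau}(x)=\frac{W'}{D_1-x}$, i.e., hyperbolas with \emph{shifted} poles (it is the inverse functions $\Delta^{+}(y)=D_0+W/y$ and $\Delta^{-}(y)=D_1-W'/y$ that have the affine-plus-reciprocal form). Consequently the squares are $\frac{W^2}{(x-D_0)^2}$, not $c_0+c_1/x+c_2/x^2$, and the per-piece equation $\underline{\tau}(x)^2-\overline{\tau}(x)^2=a$, after clearing denominators by $(x-D_0)^2(D_1-x)^2$, is a polynomial of degree $4$ in $x$ --- a quartic, as the paper states --- not a quadratic in $1/D$. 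This does not damage the complexity claim, since a quartic on a known interval can still be solved in $O(1)$ time (in closed form or numerically, taking the unique root in the piece's domain, which exists by monotonicity), but your claimed reduction to a quadratic is incorrect as written and should be replaced by the quartic.
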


\begin{proof}
The function $\alpha(D)$ is a piecewise rational expression which is decreasing
with $D$.  Its breakpoints are the union of the breakpoints of
$\underline{\tau}$ and $\overline{\tau}$.  
For each piece, we can compute the respective range of $D$ values and
of $a$ values.
To compute the
$\alpha$-stable  distribution for a given $a$, we locate the piece in
which $a$ is in the range and solve for $x$ the
equation $a=\underline{\tau}(x)^2-\overline{\tau}(x)^2 .$
Within a specified piece, this yields a quartic equation
(finding the root of a polynomial of degree $4$), and we take the
root $x$ that lies in the domain of the piece.  

Once $x$ is computed, we have the changeout
$D=2x$ which corresponds to $a$ and
the thresholds $\underline{\tau}(x)$ and $\overline{\tau}(x)$.  We can 
substitute the thresholds in the initialization of 
% lines \ref{lineutau} and \ref{lineotau} of 
Algorithm~\ref{optdelta} and apply
it to compute the $\alpha$-stable distribution. 
\end{proof}

Combining these pieces provides the result claimed by
Theorem~\ref{thm:pps}. 

\eat{
See Algorithm~\ref{deltapps}  for a pseudo code for computing $\boldsymbol{p}^{(\Delta)}$ and the corresponding
$\alpha$.

% \begin{algorithm}  \label{deltapps}
\begin{algorithmic}[1]
\Require $\Delta\in [0,\|\text{pps}(k,\boldsymbol{w})-\boldsymbol{p}\|_1]/2$
\Require Arrays $\boldsymbol{w}$ and $\boldsymbol{p}$ are sorted by
decreasing $w_i/p_i$, with entries with $p_i=0$ and $w_i>0$
at the head and entries with $w_i=0$ at the tail.

\Function{p-opt}{$\boldsymbol{w},\boldsymbol{p},\Delta$}
\Statex \Comment{Compute $\Delta$-active prefix $(1,\ldots,h)$ of  array}
\State \label{hpos}
$h \gets \arg\min_{\ell\in [n]} \sum_{i=1}^\ell \min\{1, 
\frac{p_\ell w_i}{w_\ell}\}-p_i \geq \Delta$
\State
$(q_1,\ldots,q_h) \gets \text{pps}(\sum_{i=1}^h p_i + \Delta,(w_1,\ldots,w_h))$
\State
$\underline{\tau} \gets \tau\text{-pps}(\sum_{i=1}^h p_i+\Delta,(w_1,\ldots,w_h))$

\Statex \Comment{Compute $\Delta$-active suffix $(t,\ldots,n$) of array}
\State $R \gets \sum_{i|w_i=0} p_i $ \Comment{Compute probabilities of
  $0$ weights}

\If {$R< \Delta$}
\State $t\gets n+1$
\While {$R>0$}
\State $t\gets t-1$
\State
  $q_i \gets \max\{0, p_i-R\}$
\State 
  $R\gets R-p_i$
\EndWhile
\State $\overline{\tau} \gets 0$
\Else \Comment {$R\geq \Delta$}
\State $t_0\gets \arg\min_{\ell\in [n]} w_i>0$ \Comment{last positive weight
  in  array $\boldsymbol{w}$}
\State \label{tpos}
$t \gets \arg\max_{\ell\in [t_0]} \sum_{i=1}^\ell  p_i- \min\{1, p_\ell
\frac{w_i}{w_\ell}\}\geq \Delta$
\State
$(q_t,\ldots,q_n) \gets \text{pps}(\sum_{i=t}^n p_i - \Delta,(w_t,\ldots,w_n))$
\State
$\overline{\tau} \gets \tau\text{-pps}(\sum_{i=t}^n p_i-\Delta,(w_t,\ldots,w_n))$
\EndIf
\Statex \Comment{Copy unchanged probabilities at middle of array}
\For {$i=h+1,\ldots, t-1$}
\State $q_i\gets p_i$
\EndFor

 \State $\alpha \gets \underline{\tau}^2-\overline{\tau}^2$
 \Comment{Lagrange multiplier value ``rate of variance improvement per
   unit change'' for changeout $2\Delta$}

\Return{$(\boldsymbol{q},\alpha)$}
\EndFunction
\end{algorithmic}
% \end{algorithm} 

% \State $\underline{\tau}$ is the solution of $\Delta=\sum_{i | w_i >
% p_i \underline{\tau}} (\min\{1,\frac{w_i}{\underline{\tau}}\}-p_i  )$
% \State $R \gets \sum_{i|w_i=0} p_i $
% \If {$R< \Delta$}
% \State $\overline{\tau}$ is the solution of $\Delta=\sum_{i |
%   \frac{w_i}{p_i}< \overline{\tau}} (p_i-\frac{w_i}{\overline{\tau}}
% )$
% \Else
% \State $\overline{\tau}\gets 0$
% \EndIf
% \State $\alpha \gets \underline{\tau}^2-\overline{\tau}^2$
% \Comment{Lagrange multiplier for changeout $2\Delta$}
% \For {$i\in [n]$}
% \If {$\frac{w_i}{p_i}> \underline{\tau}$}  
% \State $q_i\gets \min\{1,\frac{w_i}{\underline{\tau}}\}$

% \ElsIf {$w_i=0$}
% \State
%   $q_i \gets \max\{0, p_i-R\}$
% \State 
%   $R\gets \max\{0, R-p_i\}$

% \ElsIf  {$\frac{w_i}{p_i} < \overline{\tau}$} 

% \State $q_i\gets \frac{w_i}{\overline{\tau}}$
% \Else
% \State $q_i \gets p_i$
% \EndIf
% \EndFor

 We provide details on computing
the last position $h$ of the active prefix (line~\ref{hpos}) and the first position $t$
 of the active suffix (line~\ref{tpos}).
 For each index $\ell$ we can compute the minimum value of $\Delta$ in
 which $\ell$  will be in the active prefix:  If $w_\ell >0$ and $p_\ell=0$ then
 $\ell$ is in the active prefix for any positive $\Delta$.  Otherwise, denoting $r=w_\ell/p_\ell$, 
it is in the active prefix only when all preceding probabilities are
maximally raised so that either $p_i=1$ and $w_i/p_i\geq r$ or 
$w_i/p_i = r$,  then the sum of increases is less than $\Delta$.  That
is,
$\ell$ is in the active prefix if and only if
$\sum_{i=1}^\ell (\min\{1,w_i/r\}-p_i)\leq \Delta$.     For any given
$\Delta$,  the last index of the active prefix can be computed in
time $O(h\log h)$ using binary search, looking for $\ell=2^i$ until we
find a value in the active prefix and then exponentially narrowing the range.
Once we have the index $h$, the 
probabilities   $\boldsymbol{p}^{(\Delta)}$ on the active prefix
are  $\text{pps}(\sum_{i=1}^h p_i +     \Delta,(w_1,\ldots,w_h))$. 

Similarly, $\ell$ is  included in the active suffix for a given $\Delta$
if and only if  $\Delta\geq \sum_{i=\ell}^n (p_i-w_i/r)$.
For a given $\Delta$, we can compute the last index $t$ of the active
suffix in time $O((n-t)\log (n-t))$.
The entries of 
$\boldsymbol{p}^{(\Delta)}$ on the active suffix are
$\text{pps}(\sum_{i=t}^n p_i -
    \Delta,(w_t,\ldots,w_n))$.

 The ``rate of gain'' of variances sum
$\alpha=\frac{\partial\phi(\boldsymbol{p}^{(\Delta)},\boldsymbol{w})}{\partial
  \Delta}$, is
 non-decreasing with $\Delta$ and is $0$ when 
$\Delta=\|\text{pps}(k,\boldsymbol{w})-\boldsymbol{p}\|_1/2$.
If we wish to solve \eqref{localopt} for a given $\alpha$, we can
perform a binary search on $\Delta$.
The fitness $\phi(\boldsymbol{p}^{(\Delta)},\boldsymbol{w})$ as a function of
$\Delta$ is 
concave.
  Moreover, for each $i$, $\boldsymbol{p}^{(\Delta)}_i$ are monotone functions of $\Delta$ varying between 
the initial $p_i$ and the PPS value.

 The active prefix (and suffix) for a given $\alpha$ can be obtained
 using a binary search on $h+t$ positions.   Noting that
the range of $\underline{\tau}$ values in which $h$ is the last index
in the active prefix is $\underline{\tau}\in [w_{h+1}/p_{h+1}, w_h/p_h)$.
For being the smallest index in the active suffix, $h$ must satisfy
 $\overline{\tau}\in [ w_h/p_h,w_{h-1}/p_{h-1})$.  Similar partition
 holds for being first in the active suffix.  
}

\subsection{Incrementally maintaining the stable PPS distribution} \label{dynpps}
We now show how we can
efficiently maintain an $\alpha$-stable distribution when
  modifications to the weight vector are made one entry at a time.
Consider an $\alpha$-stable distribution and the set of ratios
$w_i/p_i$.  
Following the notation we used in the batch case, 
let $\underline{\tau}$ be the largest ratio with
  $p_i<1$; and let $\overline{\tau}$ be $0$ if there is an entry with
zero weight and positive probability and otherwise the smallest ratio
with $w_i>0$.  The difference of the squares is at most $a$:
$\underline{\tau}^2-\overline{\tau}^2 \leq a$.  When the weight of an
entry $i$ is modified, its ratio changes.    
If this results in a maximum difference of
squares exceeding $a$,  we need to recompute the $\alpha$-stable distribution.
This event can happen only when  either
$p_i<1$ and $w_i/p_i > \underline{\tau}$ or
$p_i=0$, $w_i>0$ and $\overline{\tau}>0$ or $w_i/p_i <
\overline{\tau}$.
  The modification that involves the modified entry itself  must be at
  one end of the ordered ratios list and a sequence of entries from the other end of the ordered ratios.
We increase $\overline{\tau}$ to simultaneously compensate for decreasing
$\underline{\tau}$ or vice versa, but noticing we never need to exceed the
original threshold value at the affected end.
 The total number of modified probabilities, however, can be large,
 even when we modify the weight of a single entry.  On the other hand, the
expectation of the change in the sample (the $L_1$ distance between
the initial and final $\alpha$-stable distributions) is at most $1$
and we can ensure (as we outlined in the subsampling discussion) that there
is at most one insertion and one deletion from the sample.

Our treatment of the incremental problem introduces a
 representation of the $\alpha$-stable distribution, which allows for
  efficient tracking and modifications of  sampling probabilities
  of sets of entries and also efficient maintenance of a corresponding
  sample.
This is combined together with amortized  analysis which charges the
modifications to the representation to previous modifications in the input.
We show the following  (the details are deferred to Appendix~\ref{moredynpps}):

\begin{theorem} \label{dynamicpps}
An $\alpha$-stable PPS distribution, and a corresponding sample, can be maintained in amortized
$\operatorname{poly}\log(n)$ time per modification.
\end{theorem}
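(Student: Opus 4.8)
The plan is to maintain the $\alpha$-stable distribution only \emph{implicitly}, through its two thresholds together with an augmented search structure over the entries, so that a single weight change never forces us to touch all of the (possibly many) coordinates $q_i$ that move. Recall from the batch analysis that the $\alpha$-stable distribution is completely determined by the pair $(\underline{\tau},\overline{\tau})$ subject to $\sum_i q_i=k$ and $\underline{\tau}^2-\overline{\tau}^2=a$, with $q_i=\min\{1,\max\{p_i,w_i/\underline{\tau}\}\}$ on the increase side (ratio $w_i/p_i>\underline{\tau}$), $q_i=\min\{p_i,w_i/\overline{\tau}\}$ on the decrease side (ratio $<\overline{\tau}$), and $q_i=p_i$ in between. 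I would therefore keep a balanced search tree over the entries keyed by the ratio $w_i/p_i$, annotated at each node with subtree aggregates ($\sum w_i$, $\sum p_i$, counts, and sums that locate the capped region $w_i\ge\underline{\tau}$ and the zero-weight region, the latter possibly via a secondary two-dimensional structure). These aggregates let me evaluate, for any candidate $\underline{\tau}$ (with $\overline{\tau}=\sqrt{\underline{\tau}^2-a}$ forced, and the degenerate $\overline{\tau}=0$ case handled as in the batch algorithm), the total mass $\sum_i q_i$ in $O(\log n)$ time; since this mass is monotone decreasing in $\underline{\tau}$, the defining equation $\sum_i q_i=k$ is solved by descending the tree, giving the new thresholds in $O(\operatorname{poly}\log n)$.

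Processing a single weight modification then proceeds in three steps. First, relocate the modified entry $j$: delete it and reinsert it at its new ratio, an $O(\log n)$ structural operation that also repairs the subtree annotations. Second, re-solve for $(\underline{\tau},\overline{\tau})$ as above; this records the new distribution implicitly, so the (possibly $\Omega(n)$) coordinates whose $q_i$ changed are updated in bulk rather than one at a time. Third, refresh the sample. Using permanent random numbers with $i\in S\iff u_i\le q_i$, an entry flips its membership exactly when the governing threshold crosses $w_i/u_i$; maintaining a secondary structure keyed by these flip-values lets me enumerate precisely the entries that cross as $\underline{\tau}$ and $\overline{\tau}$ move, and by the coordination guarantee the expected number of such flips (equivalently the $L_1$ change of the distribution, shown to be at most $1$) is $O(1)$, so this step costs $O(\operatorname{poly}\log n)$ in expectation.

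The main obstacle, and the reason the bound is amortized rather than worst-case, is that although the thresholds move in $O(\operatorname{poly}\log n)$ time, the induced re-partition of entries among the increase/unchanged/decrease regions (and among their capped and zero-weight sub-regions) can be large for a single update: a large swing in $\underline{\tau}$ or $\overline{\tau}$ can sweep many entries across a boundary, and each migration is a genuine modification of the representation's sub-region annotations. I would control this with a potential-function argument that charges each boundary crossing to a prior input modification. The key monotonicity is that raising a weight pushes entries out of the increase region and into the decrease region (and lowering a weight does the reverse), so an entry can re-cross a given boundary only after a later input change moves it back; bounding the total threshold ``travel'' by the accumulated input changes, and charging crossings to the modifications that created the corresponding imbalance, yields $O(\operatorname{poly}\log n)$ amortized migrations per update. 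Combining the $O(\operatorname{poly}\log n)$ structural and threshold-solving work, the expected $O(\operatorname{poly}\log n)$ sample refresh, and the amortized migration bound gives the claimed amortized $\operatorname{poly}\log(n)$ time per modification, while the implicit representation still reports the sampling probability of any entry, or the current sample, on demand.
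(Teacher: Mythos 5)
Your high-level architecture matches the paper's: an implicit representation of the distribution driven by the two thresholds, PRN-based sample maintenance, and an amortized charge against past input modifications. But the core of the argument --- the part that actually delivers the $\operatorname{poly}\log(n)$ amortized bound --- has a gap in two related places. First, your balanced tree is keyed by the ratio $w_i/p_i$, where $p_i$ is the \emph{current} output probability; after a recomputation, every entry in the affected prefix (resp.\ suffix) has its probability reset to $\min\{1,w_i/\underline{\tau}\}$ (resp.\ clamped by $w_i/\overline{\tau}$), so the ratios of possibly $\Omega(n)$ entries collapse onto the new threshold value. Calling this a ``bulk update'' hides the real difficulty: within such a tied-ratio group the order that matters for all future queries (which entries hit the cap $q_i=1$ or the floor $q_i=0$ first as the threshold moves) is the order by \emph{weight}, and tied groups created at different times can have arbitrarily interleaved weight ranges. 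Restoring a single weight-sorted order when several such groups must later be re-absorbed into a prefix or suffix is a genuine merge whose cost is not polylogarithmic in the worst case. The paper introduces the ``stretch'' structure precisely for this: a weight-sorted group supporting split, merge of non-overlapping ranges, and lookups in $O(\log n)$, with merging of overlapping groups costing $O(\log n)$ per interleaved segment.

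Second, your amortization charges ``each boundary crossing to a prior input modification,'' but a single weight modification can legitimately sweep $\Omega(n)$ entries across the $\overline{\tau}$ (or $\underline{\tau}$) boundary --- for example, a large increase to one weight forces many suffix entries to be decreased to compensate --- so the number of crossings is not bounded by the number of modifications, and this charging scheme yields $\Omega(n)$, not $\operatorname{poly}\log(n)$, amortized time. The paper's potential is attached to a different quantity: an inversion between weight order and ratio order ($w_i>w_j$ but $w_i/p_i<w_j/p_j$) can only be \emph{created} by an input modification, because two entries once processed together in a prefix or suffix keep their relative ratio order thereafter; each merge step resolves such inversions, so the total merge cost over the whole update sequence is $O(\log n)$ per modification even though a single step may move many entries. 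Sweeping many already-sorted entries across a boundary is then cheap because they move as a single stretch. To repair your proof you would need to replace the crossing-based potential with an inversion- (or segment-) based one and equip your structure with split/merge operations on tied-ratio, weight-sorted groups.
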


\eat{
\subsubsection*{Fixed size sample ?}

  Varopt sample does not have the equivalent of  {\sc Subsampling}.
  Simple example show that in general, it is not possible to subsample
  optimally and retain varopt properties.

  One possible approach with bottom-$k$ is to use fixed seeds to
  compute ranks (priorities) and apply the top-$k$ like swaps to the priorities.
Effectively, this gives two thresholds for each step.  One for staying
in and one for getting admitted/ejected.

 Using the thresholds and the data in previous states, we can compute
 effective inclusion probability for each item (fixing seeds of other
 items) and obtain unbiased adjusted weights.
}

\section{Additive fitness and changeout cost}

We now describe a general approach for a broad class of optimization
problems.  
We say an optimization problem with input domain of
the vectors  ${\cal X} \subset R^n$ and outputs
  ${\cal S}$  that are subsets $S\subset [n]$ has {\em additive
    fitness} if its
  objective is to maximize the fitness function $\phi(S,\boldsymbol{x})=\sum_{i\in S} x_i$:
 $$\text{{\sc opt}}(\boldsymbol{x})=
{\arg\max_{S\in {\cal S}} \sum_{i\in S}} x_i\ .$$
The changeout cost between two outputs is {\em additive}
if there are constants $c_i$ such that
$d(S,S')=  \sum_{i\in S'\setminus S} c_i$.  
The constant $c_i$ can be interpreted as
the cost of ``bringing in'' $i$ to the output.
When the optimization problem has additive fitness and changeout cost,
the  $\alpha$-stable optimum w.r.t. current output $S$ can be
posed as a best-fit instance  on a modified input
vector with an appropriate quantity subtracted from entries in $i\not\in S$:

\begin{theorem}  \label{alphaoptreduction}
% An optimization problem with input domain of
% the vectors  ${\cal X} \subset R^n$, outputs
%   ${\cal S}$  that are subsets $S\subset [n]$ of size $k$, and the
%  objective of maximizing the fitness function $\phi(S,\boldsymbol{x})=\sum_{i\in S} x_i$:
%  $$\text{{\sc opt}}(\boldsymbol{x})=
% \textstyle{\arg\max_{S\in {\cal S}} \sum_{i\in S}} x_i\ .$$
Let {\sc opt} be an optimization problem with additive fitness and changeout.
The $\alpha$-stable optimum 
   % with respect to changeout cost $d(S,S')=  |S'\setminus S|$
satisfies
$\alpha\text{-{\sc opt}}(S,\boldsymbol{x},a)= \text{{\sc opt}}(\boldsymbol{y})\
,$ where $y_i=x_i - I_{i\not \in S}  c_i a$ and $I$ is the indicator function.
\end{theorem}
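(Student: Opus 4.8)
The plan is to observe that the $\alpha$-stable problem and the best-fit problem on the modified instance $\boldsymbol{y}$ have \emph{exactly} the same feasible set and the \emph{same} objective function of the decision variable, so their maximizers coincide with no error term. Throughout I keep the current output $S$ fixed and let $T\in{\cal S}$ range over arbitrary candidate outputs; the crucial point, which I would state explicitly, is that passing from $\boldsymbol{x}$ to $\boldsymbol{y}$ alters only the entry weights and leaves the combinatorial family ${\cal S}$ of admissible outputs (matroid, matching, spanning tree, fixed-size set, etc.) untouched. This is what licenses using the original optimizer as a black box.

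First I would expand the $\alpha$-stable objective from \eqref{localopt} using additive fitness and additive changeout. Since the changeout charges $c_i$ for each element brought into the new output, the penalty for choosing $T$ against current $S$ is $\sum_{i\in T\setminus S} c_i$, so
\[ \phi(T,\boldsymbol{x}) - a\, d(T,S) = \sum_{i\in T} x_i - a\sum_{i\in T\setminus S} c_i . \]
Next I would expand the fitness of the same $T$ on the modified instance $y_i = x_i - I_{i\not\in S}\, c_i\, a$:
\[ \sum_{i\in T} y_i = \sum_{i\in T}\big(x_i - I_{i\not\in S}\, c_i\, a\big) = \sum_{i\in T} x_i - a\sum_{i\in T} I_{i\not\in S}\, c_i . \]
The key identity is $\sum_{i\in T} I_{i\not\in S}\, c_i = \sum_{i\in T\setminus S} c_i$, because the indicator $I_{i\not\in S}$ picks out precisely those members of $T$ lying outside $S$. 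Hence the two displayed quantities are equal for every $T\in{\cal S}$.

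The conclusion is then immediate: the objective $\phi(T,\boldsymbol{x}) - a\, d(T,S)$ maximized by $\alpha\text{-{\sc opt}}(S,\boldsymbol{x},a)$ agrees pointwise on ${\cal S}$ with the fitness $\sum_{i\in T} y_i$ maximized by $\text{{\sc opt}}(\boldsymbol{y})$, so the two problems have the identical set of maximizers (any fixed tie-breaking rule applied to both yields the same output), giving $\alpha\text{-{\sc opt}}(S,\boldsymbol{x},a)=\text{{\sc opt}}(\boldsymbol{y})$. I do not expect a genuine obstacle, since this is an exact algebraic identity rather than an approximation; the only thing demanding care is the bookkeeping of the penalty's direction --- confirming that the cost is levied on elements \emph{entering} the output (those in $T\setminus S$) and that the modification $y_i = x_i - I_{i\not\in S}\, c_i\, a$ subtracts exactly that amount, so that no spurious constant or sign slips in. Once this alignment is checked, equality of the objectives, and hence of the $\arg\max$, follows at once.
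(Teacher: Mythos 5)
Your proposal is correct and matches the paper's own proof: both expand the $\alpha$-stable objective via additive fitness and changeout, absorb the penalty $-a\sum_{i\in T\setminus S}c_i$ into the modified weights $y_i$, and conclude that the two objectives agree pointwise on ${\cal S}$ so the maximizers coincide. Your version merely spells out the indicator bookkeeping and the unchanged feasible set more explicitly than the paper's two-line derivation.
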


\begin{proof}
From \eqref{localopt},
\begin{align*}
\alpha\text{-{\sc opt}}(S,\boldsymbol{x},a)  & = {\arg\max_{S'\in {\cal S}}}
 \Big(- a {\sum_{i \in S'\setminus S}} c_i + {\sum_{i\in S'}} x_i \Big)\\
&= \arg\max_{S'\in {\cal
    S}} {\sum_{i\in S'}} y_i = \text{{\sc opt}}(\boldsymbol{y})\ .
\end{align*}
 % \vspace*{-0mm}
\end{proof}

We now consider maintaining an $\alpha$-stable optimum efficiently
under incremental updates. 
We show that  this can be done using  an off-the-shelf dynamic {\sc opt} algorithm:
\begin{theorem}  \label{dynamic}
If fitness and changeout are additive, then
 % {\sc opt}, the fitness function $\phi$, and the changeout cost
 % function $d$, satisfy the conditions of Theorem~\ref{alphaoptreduction}.
when the input is modified,
the $\alpha$-stable solution can be maintained by applying a dynamic
{\sc opt} algorithm with a corresponding number of modified entries. 
%number of modified entries equal to 
%the modified entries in the input plus the changeout cost to the new
%$\alpha$-stable solution.
\end{theorem}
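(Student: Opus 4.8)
The plan is to leverage the static reduction of Theorem~\ref{alphaoptreduction}, which states that the $\alpha$-stable optimum with respect to a reference output $S$ coincides with an ordinary best-fit, $\alpha\text{-{\sc opt}}(S,\boldsymbol{x},a) = \text{{\sc opt}}(\boldsymbol{y})$, on the transformed vector $y_i = x_i - I_{i \notin S} c_i a$. Since a dynamic {\sc opt} algorithm by definition maintains the optimum of its input vector under entry rewrites, it suffices to track how $\boldsymbol{y}$ changes as the stable problem evolves and to forward those changes to the dynamic structure as a black box, reading off the maintained {\sc opt}($\boldsymbol{y}$) as the current $\alpha$-stable output.

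First I would isolate the effect of an input update. For a fixed reference $S$, the map $\boldsymbol{x} \mapsto \boldsymbol{y}$ is coordinatewise: the penalty $I_{i \notin S} c_i a$ depends only on the reference and the constants $c_i$, so rewriting a single input entry $x_i$ changes exactly the single transformed entry $y_i$, by the same amount. Consequently each incremental update to the input translates into a single entry rewrite of $\boldsymbol{y}$, which we hand to the dynamic {\sc opt} algorithm; it returns the updated {\sc opt}($\boldsymbol{y}$), which by Theorem~\ref{alphaoptreduction} is precisely the new $\alpha$-stable output. This yields the one-to-one correspondence between modified input entries and modified {\sc opt} entries.

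Next I would account for the reference advancing from the previous output $S_{j-1}$ to the newly computed output $S_j$ between consecutive steps. Writing $y_i^{(j+1)} = x_i^{(j+1)} - I_{i \notin S_j} c_i a$ and comparing with $y_i^{(j)} = x_i^{(j)} - I_{i \notin S_{j-1}} c_i a$, the penalty term switches only at entries $i \in S_{j-1} \triangle S_j$. Hence the total number of rewrites to $\boldsymbol{y}$ between consecutive steps is the number of rewritten input entries plus $|S_{j-1} \triangle S_j|$, the latter being exactly the changeout already incurred by the stable solution. Feeding these rewrites to the dynamic {\sc opt} algorithm maintains the solution with a corresponding number of modified entries, as claimed.

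The main obstacle is the apparent circularity in the transformation: $\boldsymbol{y}$ depends on the reference, which is itself the output we are maintaining. I would resolve this by observing that the dependence is not circular in time — $S_j$ is determined from $\boldsymbol{y}^{(j)}$, which uses only the earlier reference $S_{j-1}$, so every quantity needed to form $\boldsymbol{y}^{(j+1)}$ is already available when we advance — and by charging the penalty-switching rewrites at $S_{j-1} \triangle S_j$ to the output changeout. This keeps the count of modified {\sc opt} entries proportional to the combined input-update and changeout cost rather than to $n$, so that any off-the-shelf dynamic {\sc opt} algorithm supporting entry-value rewrites (as do standard dynamic algorithms for MST, bipartite matching, and top-$k$) suffices.
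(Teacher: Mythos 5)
Your proposal is correct and follows essentially the same route as the paper's proof: translate each input rewrite into a single rewrite of the transformed vector $\boldsymbol{y}$, and when the output set changes, adjust the penalty term $a c_i$ on the entries entering or leaving $S$, so the total number of rewrites fed to the dynamic {\sc opt} algorithm is the number of input updates plus the set difference between consecutive $\alpha$-stable outputs. Your explicit handling of the apparent circularity (the reference used at step $j+1$ is the already-computed $S_j$) is a welcome clarification but not a departure from the paper's argument.
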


\begin{proof}
For an input $\boldsymbol{x}$ and current output ${\bf S}$, we consider the
vector $\boldsymbol{y}$ such that
 $y_i=x_i- I_{i\not \in S} a c_i$.
  %  (respectively,  for fixed-size output and uniform costs we can
  %  use  $y_i=x_i+ I_{i\in S} a$.) 
 Recall that the $\alpha$-stable
optimum is the best-fit with respect to input $\boldsymbol{y}$.  It therefore
suffices to translate the modifications 
in $\boldsymbol{x}$ to the modifications in $\boldsymbol{y}$.
 When an entry of $\boldsymbol{x}$
is modified, we apply a  corresponding modification to the
same entry in $\boldsymbol{y}$.  When the algorithm modifies the output, we
increase  by $a c_i$ the value of new entries inserted to $S$ and
decrease by $a c_i$ the value of entries that are deleted from $S$.
 % (respectively, increase by $a$ the value of new entries inserted to $S$ and decrease by $a$
  % the value of entries that are deleted from $S$).
The total number of updated entries in $\boldsymbol{y}$ is the sum of  the
updates to the input and the set difference between the $\alpha$-stable outputs.
\end{proof}

 \subsubsection*{ Fixed-size output and uniform costs}
In this section, we study problems that possess
additional structure, namely {\em fixed-size output},
$\forall S\in {\cal S}, |S|=k$ and {\em uniform costs} $c_i=1$.
In particular, this formulation applies to matroids.
We can equivalently treat changeout costs as those of excluding
entries currently in the output.  
Examples of such problems include Top-$k$, MST, and assignment, which
we study individually below. 

Specializing
Theorems~\ref{alphaoptreduction} and \ref{dynamic},
the $\alpha$-stable optimum can be computed by applying {\sc opt} to a
vector which adds a fixed value $a$ to entries in $S$:
$\alpha\text{-{\sc opt}}(S,\boldsymbol{x},a)= \text{{\sc opt}}(\boldsymbol{y})\
,$ where $y_i=x_i + I_{i\in S} a$:
\begin{align*}
\alpha\text{-{\sc opt}}(S,\boldsymbol{x},a) &= \arg\max_{S'\in {\cal S}}
 \Big(- a |S'\setminus S| + \sum_{i\in S'} x_i \Big) \\
 &= \arg\max_{S'\in {\cal S}}
\Big( a |S'\cap S| + \sum_{i\in S'} x_i \Big) \\
&=\arg\max_{S'\in {\cal
    S}} \sum_{i\in S'} y_i = \text{{\sc opt}}(\boldsymbol{y})\ .
\end{align*}
Since $\boldsymbol{y} \geq \boldsymbol{x}$,  our reduction can also be used when the domain of
{\sc opt} is non-negative.

\noindent
We can now understand the structure of the
objective of $\alpha$-{\sc opt} as a 
 function of $a$:
\begin{lemma}
\label{lem:linear}
The function $\max_{S'\in {\cal S}}
\big(\sum_{i\in S'} x_i +a |S\cap S'| \big)$
is the maximum of at most
$k+1$ linear functions.  Thus, it is convex and piecewise linear with at most $k$ breakpoints.
\end{lemma}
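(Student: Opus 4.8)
The plan is to exploit that, for each \emph{fixed} candidate output $S'\in{\cal S}$, the quantity $\sum_{i\in S'} x_i + a\,|S\cap S'|$ is an affine function of the single variable $a$: its intercept is $\sum_{i\in S'}x_i$ and its slope is $|S\cap S'|$. The function in the statement is therefore a pointwise maximum of affine functions of $a$, and hence automatically convex and piecewise linear. The only real content is to bound the number of \emph{distinct} slopes, and thereby the number of linear pieces, by $k+1$.

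First I would observe that, since $|S|=|S'|=k$, the slope $|S\cap S'|$ is an integer lying in $\{0,1,\dots,k\}$, so it takes at most $k+1$ distinct values. Next, for each admissible slope value $j$, I would collapse all candidates $S'$ with $|S\cap S'|=j$ into a single line by keeping only the one of largest intercept: define
$$L_j(a) = \Big(\max_{S'\in{\cal S},\,|S\cap S'|=j}\sum_{i\in S'} x_i\Big) + a\,j,$$
omitting $L_j$ when no $S'$ achieves intersection exactly $j$. Then for every $a$ the overall maximum is attained by some line of the form $L_j$, whence
$$\max_{S'\in{\cal S}}\Big(\sum_{i\in S'} x_i + a\,|S\cap S'|\Big) = \max_{0\le j\le k} L_j(a),$$
an upper envelope of at most $k+1$ lines.

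Finally I would invoke the elementary fact that the pointwise maximum of $m$ affine functions of one variable is convex and piecewise linear with at most $m-1$ breakpoints; applying this with $m\le k+1$ yields at most $k$ breakpoints, as claimed. Here the surviving lines even carry distinct integer slopes drawn from $\{0,1,\dots,k\}$, so the envelope's slope is nondecreasing in integer steps and each breakpoint marks a unit increase, as $a$ grows, of the intersection size $|S\cap S'|$ realized by the optimal $S'$.

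I do not anticipate a genuine obstacle, since the claim is at heart a parametric-optimization observation. The single point needing care is the reduction from the a priori exponentially many candidate outputs in ${\cal S}$ to just $k+1$ lines; this is exactly the grouping of candidates by the value of $|S\cap S'|$ and the retention of only the best intercept within each group. Once that reduction is in place, convexity, piecewise linearity, and the breakpoint count all follow from the basic geometry of an arrangement of at most $k+1$ lines.
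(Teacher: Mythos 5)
Your proposal is correct and follows essentially the same route as the paper's proof: both view the objective as an upper envelope of affine functions of $a$, group candidate outputs $S'$ by the intersection size $|S\cap S'|\in\{0,\dots,k\}$, keep one dominant line per slope, and conclude convexity, piecewise linearity, and the bound of $k$ breakpoints from the geometry of at most $k+1$ lines.
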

\begin{proof}
The objective is a  (restricted)
  {\em parametric} version of the original optimization
  problem~\cite{Mi,Karp,KarpOrlin81,GGT,Gusfield}.
In a solution with respect to a parameter $a$, the
weight of each entry is replaced by the linear function
$x_i+a$ when $i\in S$ and is fixed to $x_i$ otherwise.  
The value of a particular output $S'$ with respect to $S$ and $a$ is
$\sum_{i\in S'} x_i + a|S\cap S'|$.  
The optimal solution as a function of $a$ 
is the maximum over all $S'\in {\cal S}$.  
These are linear functions with at most $k+1$ different slopes,
corresponding
to the intersection size $|S\cap S'|$.  For each intersection size
$\{0,\ldots,k\}$,
there is one dominant linear function.   The maximum of linear
functions
with slopes $\{0,\ldots,k\}$ is such that each function 
dominates in at most one piece and the slopes are increasing.
\end{proof}

\noindent
If we explicitly compute a representation of the objective function,
we can obtain a $\Delta$-stable optimum, by identifying
a corresponding value of the parameter $a$:
The breakpoints of the function correspond to decreasing
changeouts (increasing slopes) from changeout $k$ (slope $0$) to
changeout $0$ (slope $k$).
The marginal
gain in fitness from each unit increase in changeout is the
respective value of
$a$ at the breakpoint.  Hence: 
\begin{corollary}
When the problem has additive fitness,  additive and uniform changeout, 
and fixed-size
output, 
the tradeoff is concave. 
\end{corollary}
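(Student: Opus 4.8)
The plan is to read off the concavity directly from the convexity of the parametric objective in Lemma~\ref{lem:linear}, via the Lagrangian correspondence between the parameter $a$ and the changeout $D$. Writing $g(a)=\max_{S'\in\mathcal S}\big(\sum_{i\in S'}x_i+a\,|S\cap S'|\big)$, the fixed-size and uniform-cost assumptions give $d(S',S)=|S'\setminus S|=k-|S\cap S'|$, so that the $\alpha$-stable objective is exactly $g(a)-ak$. Lemma~\ref{lem:linear} tells us $g$ is the upper envelope of lines whose slopes are the intersection sizes $|S\cap S'|\in\{0,1,\dots,k\}$; being convex and piecewise linear, the dominant slope is nondecreasing as $a$ grows. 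First I would fix, for each slope $j$ that occurs, a dominant output $S'_j$ with $|S\cap S'_j|=j$ realizing the best fitness $\Phi_j:=\phi(S'_j,\boldsymbol x)$ at that intersection size; its changeout is $D=k-j$, so small $a$ corresponds to a small slope and large changeout, and large $a$ to full intersection and no change.

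Next I would examine a breakpoint. At the value $a=\beta_j$ where the dominant slope increases from $j$ to $j+1$ the two defining lines coincide, $\Phi_j+\beta_j\,j=\Phi_{j+1}+\beta_j\,(j+1)$, whence $\beta_j=\Phi_j-\Phi_{j+1}$. Thus $\beta_j$ is precisely the fitness gained by relaxing the changeout budget from $k-(j+1)$ to $k-j$, i.e. the marginal gain for one extra unit of allowed change; this is the discrete realization of the relation $a=\partial\,\Delta\text{-}\phi/\partial D$ recorded in the Model section, and positivity of $\beta_j$ forces $\Phi_j>\Phi_{j+1}$, as expected. Writing the tradeoff as $\Delta\text{-}\phi(S,\boldsymbol x,D)=\Phi_{k-D}$, the increment at changeout $D$ is $\Phi_{k-D}-\Phi_{k-D+1}=\beta_{k-D}$.

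Concavity then follows from the ordering of the breakpoints: convexity of $g$ means its breakpoints satisfy $\beta_0<\beta_1<\cdots<\beta_{k-1}$ as the slope climbs from $0$ toward $k$. Hence as $D$ increases from $1$ to $k$ the marginal gains $\beta_{k-D}$ run through $\beta_{k-1},\beta_{k-2},\dots,\beta_0$ in decreasing order, and a non-increasing sequence of marginal gains is exactly concavity of $\Delta\text{-}\phi$ in the integer argument $D$.

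The step I expect to need the most care is the fact that Lemma~\ref{lem:linear} only guarantees at most $k$ breakpoints, so some intersection sizes may fail to appear as dominant slopes. When a slope $j$ is skipped, the point $(k-j,\Phi_{k-j})$ is dominated --- it lies strictly below the chord joining its realized neighbours --- so the genuine tradeoff, i.e. the best fitness attainable for each changeout budget, is the upper concave hull of the points $(D,\Phi_{k-D})$. I would therefore phrase the final argument on the realized vertices of this hull, where the slope jumps are exactly the successive breakpoints of $g$ and the decreasing-marginal-gain computation above applies verbatim; the dominated changeout values sit below the hull and do not disturb concavity.
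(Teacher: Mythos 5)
Your argument is essentially the paper's own proof: the paper likewise derives concavity by reading off the breakpoints of the convex piecewise-linear parametric objective from Lemma~\ref{lem:linear}, interpreting the value of $a$ at each breakpoint as the marginal gain in fitness per unit of changeout, and concluding these marginal gains are non-increasing in $D$. Your extra care about intersection sizes that never become dominant slopes is a refinement the paper glosses over, but it does not change the route.
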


\noindent
There are at most $k+1$ different $\alpha$-stable solutions (eliminating
duplications) and thus {\em all} can be specified in $O(k^2)$ space.
When the tradeoff is monotone, then this description requires only $O(k)$ space.   This is because there are
at most $2k$ entries involved in total, and each has a single exit or
entry point as we sweep the parameter.  
%  We will explore three instances:  Top-$k$, MST, and assignment.  We
We will show that monotonicity holds for Top-$k$ and MST but not for assignment.

\subsection{Stable top-$k$}

The top-$k$ problem has input domain ${\cal X} \subset R^n$ so that 
$\boldsymbol{x}_i$ is the `weight' of $i$. 
The permitted outputs ${\cal S}$ are all subsets $S\subset [n]$ of
size $k$, and the goal is to maximize $\phi$, the sum of weights of
the selected subset.  
The problem satisfies the conditions of Theorem~\ref{alphaoptreduction}
and therefore the  $\alpha$-stable top-$k$ with respect to input $\boldsymbol{x}$ and current set $S$  is
 $\alpha\text{-}\text{top-}k( S,\boldsymbol{x},a) \equiv \text{top-}k(\boldsymbol{y})$, where $y_i=x_i+ I_{i\in S} a$.  

 % (the alternative forms
 % $\phi(S,\boldsymbol{x})=\sum_i I_{i\in S} \psi(x_i)- I_{i\not\in S}
 % \psi(x_i)$ and  $\phi(S,\boldsymbol{x})=-\sum_{i\not\in S} \psi(x_i)$ are
 % equivalent for our purposes). 
 % Also note that we can assume $\psi(x)=x$
 % wlog, as we can equivalently apply $\psi$ to the inputs otherwise.

\begin{lemma}
The stability-fit tradeoff for stable top-$k$ is monotone
and can be found in $O(n+k\log k)$ time.
\end{lemma}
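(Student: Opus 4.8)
The plan is to exploit the reduction established just above: by Theorem~\ref{alphaoptreduction} (specialized to fixed-size output and uniform costs), $\alpha\text{-top-}k(S,\boldsymbol{x},a)=\text{top-}k(\boldsymbol{y})$ with $y_i=x_i+I_{i\in S}\,a$. So sweeping $a$ from $0$ to $\infty$ amounts to adding a common amount $a$ to every coordinate in $S$ and re-reading off the $k$ largest entries. First I would fix $T=\text{top-}k(\boldsymbol{x})$ (the $a=0$ best fit) and argue that the only coordinates whose membership can ever change are the \emph{contested} ones, $S\,\triangle\,T$: a coordinate in $S\cap T$ only gains value as $a$ grows and hence stays in, while a coordinate outside $S\cup T$ only loses ground relative to the boosted $S$-entries and hence stays out. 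Since $|S|=|T|=k$, we have $|S\setminus T|=|T\setminus S|=:m\le k$, so at most $2m\le 2k$ entries are ever involved.

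For monotonicity I would count, for a fixed coordinate $i$, how many entries dominate it as $a$ increases. If $i\in S$ its value $x_i+a$ rises, the relative order against other $S$-entries is frozen (all shift by $a$), and each non-$S$ entry can only fall behind it; hence the number of entries beating $i$ is non-increasing in $a$, so once $i$ is among the top $k$ it stays there---giving a single threshold $\tau$ with $i\notin\alpha\text{-top-}k\iff a<\tau$. Symmetrically, for $i\notin S$ the number of dominators is non-decreasing, so $i$ has a single threshold past which it is permanently excluded. This is exactly the required output monotonicity.

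To extract the whole tradeoff and its running time, I would sort the contested lists: $\sigma_1\ge\cdots\ge\sigma_m$ (the $x$-values of $S\setminus T$) and $\theta_1\ge\cdots\ge\theta_m$ (the $x$-values of $T\setminus S$). Because the $S$-entries keep their internal order and so do the $T$-entries, the selected contested set is always a \emph{staircase}: the top $p$ of the $\sigma$'s together with the top $m-p$ of the $\theta$'s, for a single $p=p(a)$. As $a$ grows, $p$ increases one unit at a time; the $p$-th swap brings in $\sigma_p$ and evicts $\theta_{m-p+1}$ at threshold $a_p=\theta_{m-p+1}-\sigma_p$. A short check ($\theta_{m-p+1}$ increasing and $\sigma_p$ decreasing in $p$) shows $a_1\le\cdots\le a_m$, so these are precisely the breakpoints, the changeout at stage $p$ is $m-p$, and the fitness drops by $\theta_{m-p}-\sigma_{p+1}\ge 0$ per step. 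Computing $T$ by linear-time selection and partitioning costs $O(n)$; forming $S\,\triangle\,T$ costs $O(n)$; and the only sorting is of the $\le k$ contested entries, costing $O(k\log k)$; the thresholds and fitness values then follow in $O(k)$. This yields $O(n+k\log k)$ overall.

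The main obstacle I anticipate is justifying the clean ``one swap at a time, nested staircase'' picture rigorously---that is, the exchange argument that the top-$m$ contested set is always a prefix-of-$\sigma$ union prefix-of-$\theta$, together with the monotonicity of the thresholds $a_p$---and handling ties consistently (a fixed tie-breaking rule, e.g.\ by index, so that crossing points are well defined and each breakpoint corresponds to exactly one insertion and one deletion). Everything else is bookkeeping: the $O(n)$ top-$k$ via selection rather than a full sort is what keeps the bound at $O(n+k\log k)$ rather than $O(n\log n)$.
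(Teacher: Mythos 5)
Your proposal is correct and follows essentially the same route as the paper: both characterize the tradeoff as a sequence of greedy swaps pairing the heaviest entries of $\text{top-}k(\boldsymbol{x})\setminus S$ against the lightest entries of $S$, with breakpoints given by the pairwise value differences, and obtain $O(n+k\log k)$ by linear-time selection followed by sorting only the $O(k)$ affected entries. Your dominator-counting argument for monotonicity and the explicit staircase/exchange justification merely flesh out steps the paper asserts more tersely.
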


\begin{proof}
Consider the sequence of swaps made by
${\alpha}\text{-top-k}( S,\boldsymbol{x},a)$ while sweeping $a$.
For sufficiently small $a \geq 0$,
  ${\alpha}\text{-top-k}( S,\boldsymbol{x},a)=S$.  
For sufficiently large $a$,
${\alpha}\text{-top-k}( S,\boldsymbol{x},a)=\text{top-k}(\boldsymbol{x})$.
The total number of swaps is
the set difference $d(S,\text{top-k}(\boldsymbol{x}))$. 
The swaps are between the lightest $i\in S$ and
the heaviest $i\not\in S$.

To compute all tradeoff points, we sort the entries in $S$ by
increasing $x_i$ to obtain the order
$(i_1,\ldots,i_k)$, 
and sort the top-$k$ entries $i\not\in S$ in decreasing order
 $(j_1,\ldots,j_k)$.
This obtains the stated time bounds. 
Let $H=\arg\max_h x_{i_h}< x_{j_h}$.
For $h\leq H$,  the most beneficial swap of at most $h$ items,
swaps-in  items $j_1,$ $\ldots,j_h$ and swaps-out items $i_1,\ldots,i_h$.
The $\alpha$-stable swap uses the maximum $h\leq H$
such that $x_{j_h}-x_{i_h}\geq a$.
The tradeoff is clearly monotone, as each entry is swapped in or out
 at one particular value.
 % **** Concavity is now established in a general argument ****
 % The incremental gain by swapping $h$ instead of $h-1$ items is
 % therefore $x_{a_h}-x_{i_h}$, which is non-increasing with $h$.    
 % That is,  the incremental value of
 % performing additional swaps is decreasing with $h$.  More concretely,
 % we obtain that the tradeoff is concave.
\end{proof}

%\medskip
\noindent
{\bf Incremental algorithm:}
The problem also satisfies the conditions of Theorem~\ref{dynamic}
and therefore to obtain an incremental algorithm, we outline 
a simple dynamic algorithm for maintaining a $\text{top-}k$ set.
We use two heap data structures.
The first heap is a min-heap of the $k$ cached entries
(we call it the ``in-heap'').
The second heap is a max-heap of the remaining $n-k$ entries (``out-heap'').
%We also maintain vector of pointers from corresponding entries to heap locations.
Weight updates are performed as updates to the weights of
the corresponding elements in the heap. 
%(accessible via the pointers), 
Swaps cause the
deletion of the minimum item in the in-heap and the maximum item in
the out-heap, and their reinsertion in the other heap.
 % Swaps are performed when $\psi(x_i)-\psi(x_j)\geq \alpha$, where
Swaps are performed when $x_i-x_j\geq \alpha$, where
$i$ is the maximum item in the out-heap and $j$ is the minimum item in
the in-heap.  
In this algorithm each update has an overhead of $O(\log n)$
---
note that an update of a single value can result in at most a single swap.

%\subsection{Top-$k$: Offline Stability/Fit  tradeoff} \label{topkoffline}
\medskip
\noindent
 {\bf The optimal offline algorithm:}
 We now provide an optimal algorithm for offline stable top-$k$.
% in Appendix~\ref{topkoffline}.
The offline problem is to maximize
$$\sum_j\sum_{i\in S_j} x_i^{(j)}-a \sum_j |S_j\setminus S_{j-1}|\ .$$
We can obtain the offline stability-fit  tradeoff by sweeping $a$.
The same set of solutions can be obtained by sweeping the maximum
number of cache swaps $K$,   and  maximizing
$\sum_j\sum_{i \in S_j} x_i^{(j)}$ subject to 
$\sum_j |S_j\setminus S_{j-1}|\leq K$.
For each item $i$ and subsequence $[j,\ell]\subset [N]$, we compute
the benefit of caching $i$ during the subsequence:  
$$b_{ij\ell}=\sum_{h=j}^\ell x^{(h)}_i\ .$$  
We seek a cover with at most $K+k$ distinct intervals such that there
are at most $k$ active intervals at each step $h$ and the cover has maximum benefit.
 This can be formulated as an integer program with variables
  $y_{ij\ell}\in[0,1]$:
\begin{align*}
& \text{maximize } \sum_{ij\ell} b_{ij\ell}  y_{ij\ell}\\
& \text{subject to }\sum_{ijl} y_{ij\ell} \leq K+k  \\
\text{and } \forall h\in [n],\  & \sum_{ j\leq h,} \sum_{\ell\geq h} \sum_{i\in
  [n]} y_{ij\ell} \leq k 
\end{align*}

\eat{
\[\text{maximize } \sum_{ij\ell} b_{ij\ell}  y_{ij\ell},
\quad 
\text{subject to}
\quad
\sum_{ijl} y_{ij\ell} \leq K+k  
\quad
\text{and}
\quad
\forall h\in [n],\   \sum_{ j\leq h,} \sum_{\ell\geq h} \sum_{i\in
  [n]} y_{ij\ell} \leq k 
\]
}

This problem can also be modeled as min-cost multi-commodity flow. 
It can also be solved via dynamic programming in polynomial time.
For maximum stability ($K=0$), we simply choose the top-$k$ keys
with maximum $b_{i,1,N}$.

\medskip
\noindent
{\bf Example:}
 % The optimal tradeoff curve for a single step of \eqref{localopt} is
 % invariant to the particular choice of a monotone increasing $\psi$ and also to scaling (say,
 % normalization)  of the input vector. The correspondence of solutions
 % with the same $\alpha$ across steps, however, does depend on these
 % choices.  We give an example which illustrates this.
Consider an input vector $\boldsymbol{x} = \langle 1, 4, 7, 5\rangle$.
 % $\boldsymbol{x}^{(2)}= [2, 3, 8, 4]$.
 The top-2 set is the entries $\{3,4\}$.
Suppose $k=2$ and at the current output is the first two entries
$S=\{1,2\}$. 
The best  swap is $1 \leftrightarrow 3$
(swapping the smallest cached value with the largest uncached value) and
the second most beneficial swap is $2 \leftrightarrow 4$ (swapping 2nd
smallest cached with 2nd largest uncached).  The optimal
tradeoff had $3$ points:  performing no swap, only the first swap, or
both swaps.  
Looking at $\alpha$-stable solution,
the first swap has utility gain $7-1=6$ and the second swap has utility gain 
$5-4=1$.  Thus the first swap is performed when $\alpha\leq
6$ and both when $\alpha\leq 1$.  
\qed

\medskip
 Our results carry over to
fitness functions with the more general  form
$\phi(S,\boldsymbol{x})=\sum_{i\in S} \psi_i(x_i),$  by treating $\psi(\boldsymbol{x})$ as the input.
 % The best-fit is
 % $\text{{\sc opt}}(\psi(\boldsymbol{x}))$, where
 % $\psi(\boldsymbol{x})$ is the vector with $i$th entry $\psi_i(x_i)$ and
 % the $\alpha$-stable  output is
 % $\text{{\sc opt}}(\boldsymbol{y})$,
 % where $y_i=\psi_i(x_i)+ I_{i\in S} a$.
 % For computing a single stable solution or tradeoff, the problem is the
% same as when $\psi$ is the identity, since we can just replace $\boldsymbol{x}$
% by $\psi(\boldsymbol{x})$.  
This generalization allows us to 
measure fitness by any  $L_p$ norm, using $\psi(x)=|x|^p$.
 % The choice of  $\psi$ is meaningful
 % when computing $\alpha$-stable solutions on a sequence of inputs
 % with the same parameter $a$.  We  demonstrate this point on our example.
The choice of $p$ matters when processing a sequence of inputs using
the same $a$ value:  Augmenting our example,
consider the $\alpha$-stable outputs for two inputs:
$\boldsymbol{x}= \langle 1, 4, 7, 5\rangle$ and  another vector
$\boldsymbol{z}= \langle 2, 3, 8, 4 \rangle$, both starting with current output
$\{1,2\}$.  The vector $\boldsymbol{z}$ has the same tradeoff curve as $\boldsymbol{x}$ when using the $L_1$ norm.
If we want to use the $L_2$ norm for
fitness,  we use $\psi(x)=x^2$.  In this case, the two most valuable swaps have utility gains of
$7^2-1^2=48$ and $5^2-4^2=9$ on $\boldsymbol{x}$ and
$8^2-2^2=60$ and $4^2-3^2=7$ on $\boldsymbol{z}$.  Therefore, on the same
$\alpha$, and looking at both vectors together, it is possible that only the first swap is performed on the
$\boldsymbol{x}$  ($60 \geq \alpha> 48$), only the first swaps are performed
on both ($48 \geq \alpha> 9$),  the two swaps 
on $\boldsymbol{x}$ and only the first swap on $\boldsymbol{z}$
($9 \geq \alpha> 7$), or all four swaps ($\alpha \leq 7$).  

\subsection{Stable MST}

In the stable MST problem, the inputs are a set of
edge weights  and the outputs are spanning trees.  The fit of a tree
is its negated weight.  The change cost between two trees is the set
difference of included edges.
The problem clearly satisfies the conditions of
Theorem~\ref{alphaoptreduction},
hence, an $\alpha$-stable MST can be obtained by adding $a$ to the edge weights
  of edges in the current output tree and computing an MST.
Applying Theorem~\ref{dynamic},
dynamic $\alpha$-stable  MST can be handled using off-the-shelf dynamic MST 
algorithms~\cite{HolmLT:JAMC01}.

 \begin{lemma}
The tradeoff for stable MST is monotone.
\end{lemma}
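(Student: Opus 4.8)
The plan is to recognize $\alpha$-stable MST as a \emph{parametric} minimum spanning tree problem and then read off monotonicity from the Kruskal (cut) characterization of edge membership. By the reduction stated just above the lemma (Theorem~\ref{alphaoptreduction} specialized to fixed-size, uniform-cost problems, where $y_i=x_i+I_{i\in S}a$), $\alpha\text{-{\sc opt}}(S,\boldsymbol{x},a)$ maximizes $\sum_{e\in S'} y_e$ over spanning trees $S'$ with $y_e=-w_e+a\,I_{e\in S}$; equivalently it is a minimum spanning tree for the effective weights $\tilde w_e(a)=w_e-a\,I_{e\in S}$. Hence as $a$ grows, the weights of the current-tree edges $e\in S$ decrease linearly while every other weight stays fixed. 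To match the output-monotonicity definition of the Model section, it then suffices to prove that as $a$ ranges over $[0,\infty)$ each $e\in S$ switches from excluded to included at most once, and each $e\notin S$ switches from included to excluded at most once.

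The key tool is the standard fact that, with a fixed total order on edges breaking ties, the MST is unique and an edge $e=(u,v)$ lies in it iff $u$ and $v$ are in distinct connected components of the subgraph $G_{<e}(a)$ formed by the edges strictly lighter than $e$ under $\tilde w(\cdot,a)$. I would then track how $G_{<e}(a)$ varies with $a$. For $e\in S$, a non-$S$ edge $f$ satisfies $\tilde w_f<\tilde w_e$ iff $w_f<w_e-a$, a condition that fails monotonically as $a$ increases, whereas an $S$-edge $f$ satisfies $w_f-a<w_e-a\iff w_f<w_e$, independent of $a$; thus $G_{<e}(a)$ only \emph{loses} edges as $a$ grows, its components only refine, and once $u,v$ are separated they remain separated, so once $e\in S$ enters the tree it stays. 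For $e\notin S$ the situation is dual: the comparisons $w_f<w_e$ against non-$S$ edges are constant in $a$, while an $S$-edge $f$ satisfies $w_f-a<w_e$ iff $a>w_f-w_e$, so $G_{<e}(a)$ only \emph{gains} edges, its components only coarsen, and once $u,v$ are merged they remain merged, so once $e\notin S$ leaves the tree it stays out.

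Together these yield, for each edge, a single threshold $\tau_e$: for $e\in S$, $e\notin\alpha\text{-{\sc opt}}(S,\boldsymbol{x},a)\iff a<\tau_e$, and for $e\notin S$, $e\in\alpha\text{-{\sc opt}}(S,\boldsymbol{x},a)\iff a<\tau_e$, which is precisely the claimed monotonicity. The main obstacle I anticipate is purely technical, namely ensuring the parametric MST is well defined throughout the sweep. I would resolve ties by refining $\tilde w_e(a)$ with a fixed tie-break on edge indices, so that $G_{<e}(a)$ uses the lexicographic strict order on $(\tilde w_e(a),\text{index})$; one checks that this refinement preserves the monotone shrinkage and growth of $G_{<e}(a)$ established above, since each $a$-dependent comparison still flips at a single value of $a$. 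When the original weights are distinct no refinement is needed and the argument applies verbatim.
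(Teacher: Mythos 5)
Your proof is correct and follows essentially the same route as the paper: both arguments rest on the observation that under $\tilde w_e(a)=w_e-a\,I_{e\in S}$ the $S$-edges decrease uniformly while all other weights are fixed, and both deduce one-way membership changes from a local optimality criterion for MST edges (the paper phrases it via the cycle rule for exclusions and the cut rule for insertions, while your ``endpoints separated in the strictly-lighter subgraph $G_{<e}(a)$'' criterion is the same fact in Kruskal form). Your version is slightly more careful than the paper's in that it handles degenerate weights explicitly via a fixed tie-breaking order, which the paper's proof glosses over.
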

\begin{proof}
 Consider starting
from an optimal MST and slowly increasing $a$, obtaining a sequence of
$\alpha$-stable MSTs.  An edge is removed 
when it is the heaviest edge in some cycle.  This can only happen if
this edge is not a member of the output $S$ (because weights of all
members of $S$ decrease by the same amount and all weights of other
edges stay the same).  Once this happen, weights of other edges in the
cycle either keep decreasing or stay the same, so the edge remains the
heaviest in that cycle and out of the MST.  Similarly, an edge is
inserted into the $\alpha$-stable MST when it has smallest weight in
some cut. Such an edge must be a member of $S$ and its weight remains
the smallest in its cut as $a$ increases.  Thus the tradeoff is
monotone.  
 % To show it is concave, consider again the sequence of
 % swaps. The gain from each swap is exactly the value of $a$ when it is
 % performed, which is increasing.
\end{proof}
The tradeoff can be computed using a parametric MST
algorithm~\cite{parMST:focs98}, but the computation can be 
made more efficient for this special case.

\begin{figure}[t]
\centering
 \includegraphics[width=0.3\textwidth]{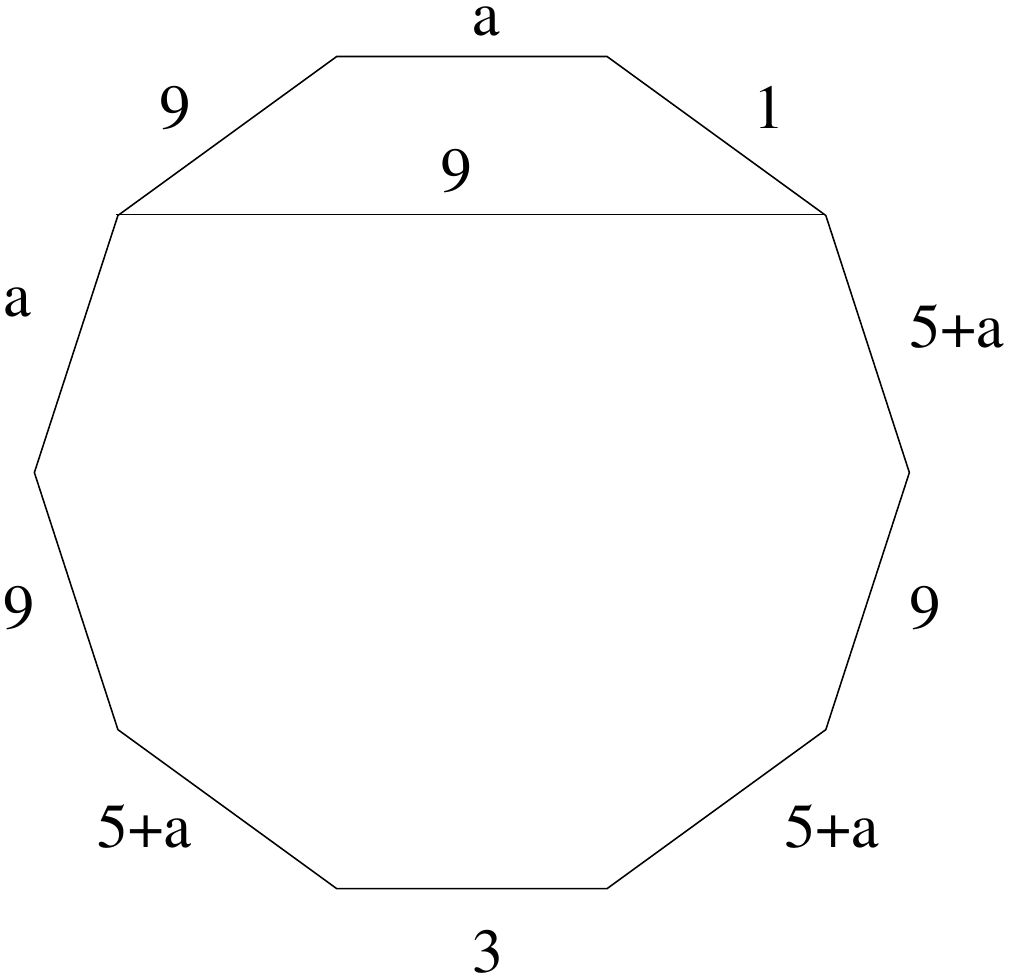} 
\caption{Example demonstrating non-monotonicity of assignment}
\label{fig:assign}
\end{figure}

\subsection{Stable assignment}

In the stable assignment problem, the inputs are weights of edges in
  a complete  bipartite graph and the outputs are maximum matchings.
The problem satisfies the conditions of
Theorem~\ref{alphaoptreduction},  and hence, 
an $\alpha$-stable solution can be computed by adding $a$
to all the weights of edges present in the current matching and
  then finding a maximum-weight assignment  (maximum weight bipartite
  matching).
We also know there are at most $n+1$ distinct stable optimal
assignments (assuming a single optimal assignment is used).
From Theorem~\ref{dynamic},
dynamic $\alpha$-stable  assignment reduces to dynamic maximum
bipartite matching~\cite{Sankowski:soda07}.

\begin{figure*}[t]
\centering
\begin{tabular}{cc}
\subfigure[Stable PPS uniformly and noticeably outperforms
  EWMA PPS. Permanent Random Numbers (PRN) improve both
  methods.]{
\includegraphics[width=0.45\textwidth]{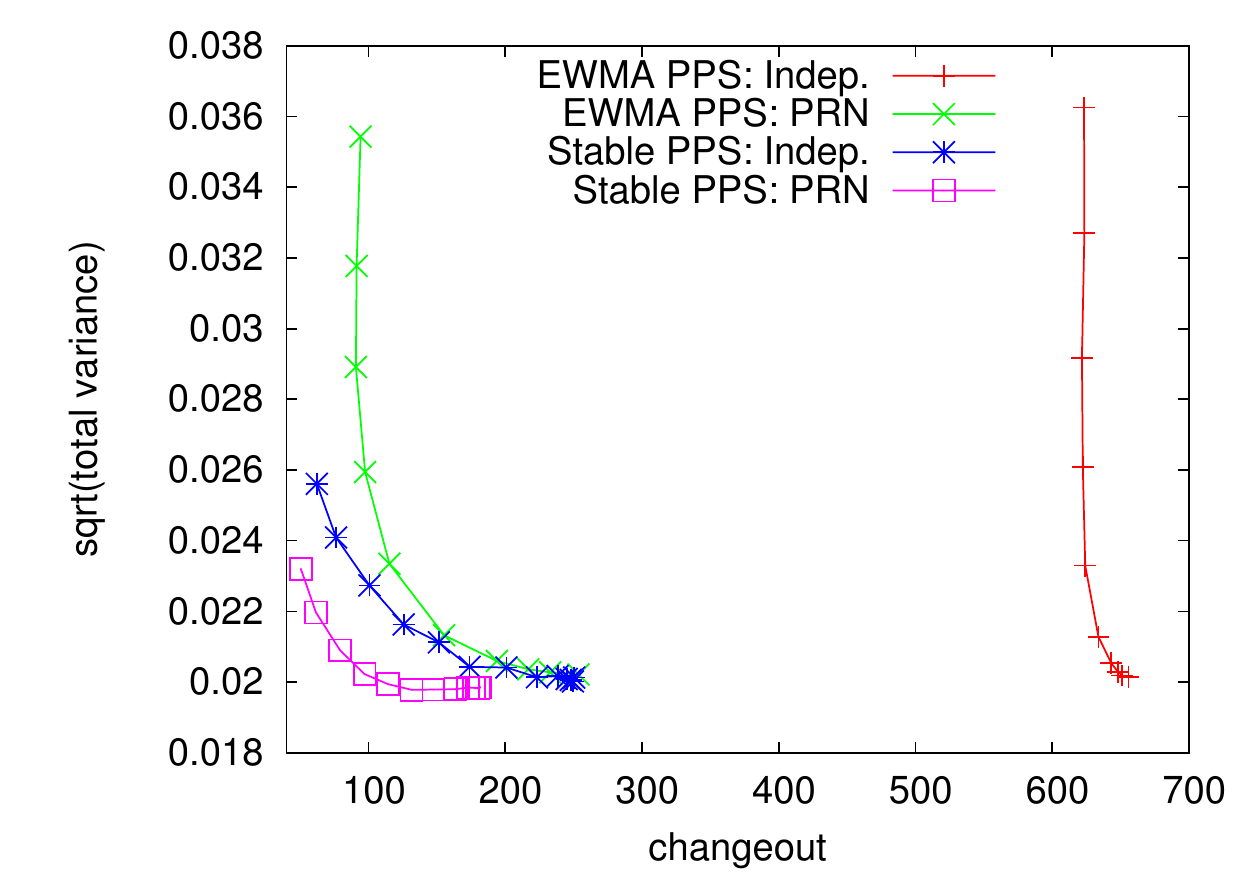}
\label{thefigsa}
}
&
\subfigure[ Similar performance for Stable and EWMA top-$k$,
  although computational aspects favor Stable top-$k$.]{
\includegraphics[width=0.45\textwidth]{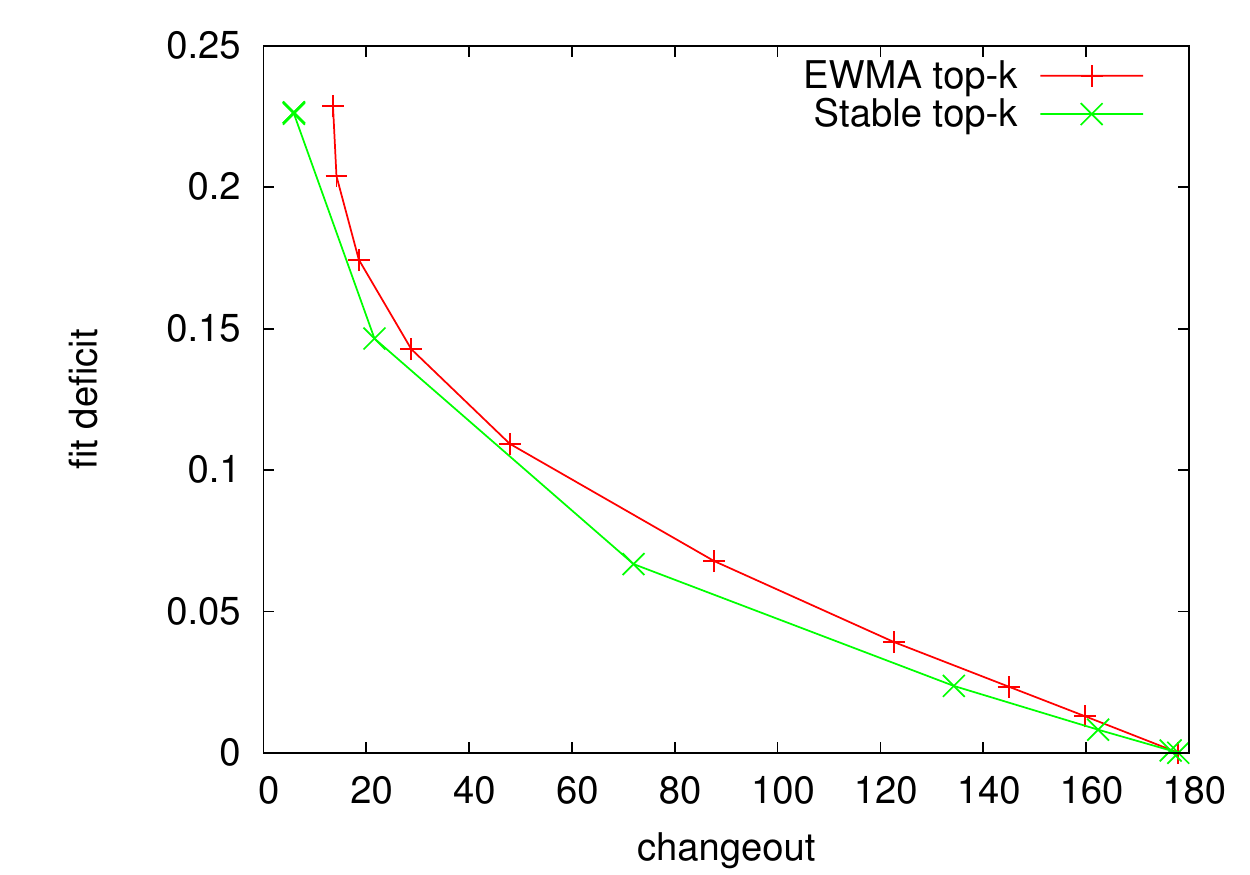}
\label{thefigsb}
}
\\
\end{tabular}
\caption{Tradeoff between stability and fit comparing Stable with Time
  Decay EWMA. }\label{thefigs}
\end{figure*}

However, unlike the previous problems, the tradeoff is non-monotone.
%An example showing that the tradeoff is not monotone is provided in Appendix~\ref{nomonoassignment}.
%\section{Assignment: Non monotonicity example} \label{nomonoassignment}
%
Specifically, 
Figure~\ref{fig:assign} shows a simple example of an assignment
problem which is non-monotone in its inclusion of edges. 
It shows a 10 node bipartite weighted graphs (where missing edges are
assumed to have weight zero). 
A matching from the previous timestep causes five edges to have their
(current) scores increased by $a$. 
We can find the tradeoff by considering solutions which include
different numbers of the previous edges, as per Lemma~\ref{lem:linear}. 
The solution that picks none of the previous edges has weight 31.
The solution that picks one of the previous edges picks the top edge,
and has weight $30+a$. 
The solution that picks three previous edges has weight $24+3a$, 
and picks the three edges with weight $5+a$. 
Lastly, the solution that picks all five previous edges has weight
$15+5a$. 
The cases of two and four previous edges are dominated by these
solutions. 
The $\alpha$-{\sc opt} objective is the maximum of the four linear equations:
$y=31$, $y=30 + 1a$, $y=24 + 3a$, $y=15 + 5a$, and each
dominates the other three for a different range of a.
%We see that each of the above four cases dominates for
%different ranges of $a$, in the ranges $[0,1)$, $(1,3)$, $(3,4.5)$ and
%above $4.5$.
Further, note that the top $a$ edge is included for $a\in (1,3)$ and
not for $a\in (0,1)$ and $(3,4.5)$.  
Therefore, the tradeoff is not monotone in its inclusion of edges.

\section{Illustration and Comparison with Time Decay}

We provide a brief performance comparison of Stable PPS and Stable
top-$k$ against their time-decayed counterparts.

\medskip
\noindent
{\bf Experimental Set-up.}
The data comprises IP flow records aggregated over the 144 ten-minute
periods over the span of a day, obtained from a large ISP. 
On average, there were $61,292$ (anonymized) keys in each period $t$,
from a total of $2,516,594$. 
The results shown are for a summary size of $k=1,000$ in each case. 

For the time decayed version we constructed for each key $i$ an
Exponentially Weighted Moving Average (EWMA) $\tilde x_{i,t}$ of the
process $x_{i,t}$. 
In EWMA PPS, at each time $t$, we perform PPS
selection over the keys present but using EWMA weights, i.e., the
weight set $\{\tilde x_{i,t}: x_{i,t}>0\}$, and formed the
Horvitz-Thompson estimate of the true weight as an auxiliary
variable, i.e., $x_{i,t}/\tilde p_{i,t}$ for selected items, where
$\tilde p_{i,t}$ is the selection probability.
In EWMA Top-$k$ we selected at each time 
$t$ the set $\tilde S_t$ of $k$ items of highest EWMA weight 
$\tilde x_{i,t}$, and as measure of fit computed the corresponding 
sum of original weights $\sum_{i\in \tilde S_t}x_{i,t}$.

\medskip
\noindent
{\bf PPS Sampling:} Figure~\ref{thefigsa} shows
the stability-fit tradeoffs for Stable PPS and EWMA PPS.
Fit is measured as the square root of sum of weight estimation variances.
Fit and changeout costs are averaged over the 144 periods. 
For EWMA PPS we swept the mean $m$ of the geometric decay from $1$ (no
history) to $64$ periods. 
We computed the $\Delta$-stable PPS distributions sweeping
target maximal changeout cost $D$ between $0$ and the target sample
size $k=1,000$. 
%At each time period a minimal changeout is
%operative when keys sampled in period $t$ are not present in the
%data at period $t+1$, and likewise a maximum changeout corresponding
%to the IPPS sampling threshold. 
For both methods we compared independent (pseudo)random number
generation, and permanent random numbers (PRN) for each key. 

EWMA PPS with independent random numbers has much larger
changeout cost (in the range 620--650) than the other methods. 
%Although
%increasing the mean decay does give some degree of stability and
%reduce changeout, the independent sampling of keys with less than unit
%probabilities leads to large changeout. 
Using key based PRN with EWMA PPS much reduces the change cost. 
Stable PPS performs noticeably
better, achieving very small changeout (around 50 for the PRN variant) 
with relatively little growth in error as changeout is
reduced (about 15\% growth compared with about 70\% growth for EWMA).
PRN has little impact on variance because, for a given
input set, marginal variance is unchanged.

\medskip
\noindent
{\bf Top-$k$:}
Figure~\ref{thefigsb} shows stability-fit tradeoffs for
Stable top-$k$ and EWMA top-$k$, where fit is represented by
the deficit between the total weight of the top-$k$ estimate, and the
true total top-$k$ weight.
Fit and changeout are again averaged over the 144 periods. 
The performance of both are similar, although the stable version
does obtain a better tradeoff. 
%Performance is similar, with each
%method achieving the same range of fit deficit, with slightly lower
%changeout for Stable top-$k$. 
%Despite the narrow performance difference
%between Stable and EWMA methods for top-$k$, and reinforcing the
%performance advantage of Stable PPS, we believe 
Nevertheless, we note that stable methods are
preferable to EWMA methods for computational reasons: they are generally
faster, and do not require historical data to be retained to determine
the output.
Moreover, the stability or flexibility parameters can be
modified directly at each iteration, whereas under time-decay
these parameters can be controlled only indirectly through the decay
parameters.
Changing decay parameters 
requires extensive historical data to be retained, and rescanned for
each change.

\eat{We compare experimental performance of Stable PPS and Stable Top-$k$
against their time decayed counterparts. Our data for this comparison
comprises IP flow records gathered during one day from a backbone
router of a large Internet Service Provider. These records were
aggregated into the data for our study yielding weights
$\{x_{i,t}\}$ being the proportion of total bytes reported as
originating from a source IP address $i$ (the key) during each of 144
ten-minute periods $t$.  There we on average 61,292 keys recorded in
each period $t$, and $2,516,594$ distinct keys in the complete
dataset. All keys were anonymized prior to the experiments. We took
the sample size $k=1,000$ each period.

\medskip
\noindent
{\bf Time Decay: EWMA PPS and EWMA Top-$k$:}
For time decayed version we constructed for each key $i$ an
Exponentially Weighted Moving Average (EWMA) $\tilde x_{i,t}$ of the
process $x_{i,t}$. In EWMA PPS, at each time $t$, we performed an PPS
selection over the keys present but using their EWMA weights, i.e., the
weight set $\{\tilde x_{i,t}: x_{i,t}>0\}$, and formed the
Horvitz-Thompson estimate of the true weight as an auxiliary
variable, i.e., $x_{i,t}/\tilde p_{i,t}$ for selected items, where
$\tilde p_{i,t}$ is the selection probability.
In EWMA Top-$k$ we selected at each time 
$t$ the set $\tilde S_t$ of $k$ items of highest EWMA weight 
$\tilde x_{i,t}$, and as measure of fit computed the corresponding 
sum of original weights $\sum_{i\in \tilde S_t}x_{i,t}$.

\medskip
\noindent
{\bf Comparison: Stable PPS vs. EWMA PPS:} Figure~\ref{thefigs} (left) shows
the stability-fit tradeoffs for Stable PPS and EWMA PPS, where fit is
represented by the square root of sum of weight estimation variances,
this and changeout being averaged over the 144 periods. For EWMA PPS we
swept the mean $m$ of the geometric decay from $1$ (no history) to $64$
periods. We computed the $\Delta$-stable PPS distributions sweeping
target maximal changeout cost $D$ between $0$ and the target sample
size $k=1,000$. At each time period a minimal changeout is
operative when keys sampled in period $t$ are not present in the
data at period $t+1$, and likewise a maximum changeout corresponding
to the IPPS sampling threshold. For both methods we employed two variants of
random number generation: independent pseudorandom, and permanent
random numbers (PRN) for each key. 

EWMA PPS with independent random numbers has noticeably larger
changeout (in the range 620--650) than the other methods. Although
increasing the mean decay does give some measure of stability and
reduce changeout, the independent sampling of keys with less than unit
probabilities leads to large changeout. Using key based PRN with EWMA
PPS reduces change to the range 100-250. Stable PPS performs noticeably
better, achieving smaller changeout (down to about 50 for the PRN
variant) with relatively little growth in error as changeout is
reduced (about 15\% growth compared with about 70\% growth for EWMA).
PRN has little impact on variance because, for a given
input set, marginal variance is unchanged.

\medskip
\noindent
{\bf Comparison: Stable top-$k$ vs. EWMA top-$k$:}
Figure~\ref{thefigs} (right) shows the stability-fit tradeoffs for
Stable top-$k$ and EWMA top-$k$, where fit is represented by
the deficit between the total weight of the top-$k$ estimate, and the
true total top-$k$ weight, this and changeout again being 
averaged over the 144 periods. Performance is similar, with each
method achieving the same range of fit deficit, with slightly lower
changeout for Stable top-$k$. 

\medskip
\noindent
{\bf Computational Aspects:} Despite the narrow performance difference
between Stable and EWMA methods for top-$k$, and reinforcing the
performance advantage of Stable PPS, we believe Stable methods are
superior to EWMA methods for computational reasons. The principal
reason is that for the Stable algorithms, the stability parameters can
be freely tuned to whatever stability or fitness goals are in
force, using only weights from current time period and probabilities from the
immediately prior time period. On the other hand, for EWMA, the stability
parameters (the mean geometric decay time) cannot be retroactively adjusted 
without retention of further history.
}

\section{Stable Extensions of Optimization Problems}

In general, we can consider the stable extension of almost any
optimization problem.
Theorem~\ref{alphaoptreduction} shows that for a natural class of
problems, $\alpha$-stability reduces to a single modified instance of
the original optimization problem.  
However, there are many other problems which do not admit such a
reduction. 
We considered the case of PPS sampling, and showed procedures which
find the optimal solution. 
Also of interest are those problems where the original
optimization is known to be hard: then we seek approximation
algorithms for the stable extension (making use of
Theorem~\ref{alphaoptreduction} if possible). 

%Many optimization problems with natural stable extensions, however,
%do not satisfy the conditions of the theorem.  PPS sampling is one
%such problem which we treated here.
%The stable extension is also natural for some problems 
%where the original optimization problem is hard. 
%In this case,  we can try to extend existing
%approximation algorithms (possibly using
%Theorem~\ref{alphaoptreduction}  if applicable).
In this section, we provide further examples of problems that
naturally admit stable extensions: 
%which we leave for future work.
\begin{trivlist}
\item
In the {\em 0/1 knapsack problem}, items have values $x_i$ and weights
$w_i$, and we
have a parameter $W$.  The goal is to find a set of items of
maximum value, subject to an upper bound of $W$ on  the
sum of their weights, $\sum_{i \in S} w_i \leq W$.
Under additive changeout cost, 
this fits the requirement of Theorem~\ref{alphaoptreduction}, 
and so any approximation algorithm for 0/1 knapsack provides the
corresponding approximation for the $\alpha$-stable version of the
problem. 
\item
In {\em shortest-paths routing} the output is a tree routing to a single
destination.  The fitness is the (weighted) sum, over nodes, of the
path length to the root.  The changeout cost between two trees is
the set difference (viewing trees as directed to the destination),
which corresponds to the number of modifications to the routing table for
the destination.
%\item
%k-median clustering, 
%\item
%k-center clustering (choose $k$ centers that minimize sum over points
%of distance to closest center). 
\item
The {\em $k$-set cover problem} is,  given a fixed collection of sets, to
pick at most $k$ to maximize coverage of $x$.
The fitness is the sum of weights of covered items, and we consider
the uniform changeout cost function.  We outline a reduction
from an  $\alpha$-stable $k$-cover to the optimum $k$-cover on a modified
input.
Our modified instance contains all the original items and sets.  It
also contains a new item that is unique for each set in
the current output  $S$.  The new items have weights $a$.
We argue that a cover is $\alpha$-stable for $a$ if and only if it is
an optimal $k$-cover of the modified input.
\end{trivlist}

\subsection{$k$-center Clustering}
\label{app:cluster}

It is natural to consider the stable extension of various {\em
  clustering} problems. 
Here, we wish to minimize the cost of the new clustering plus the cost
of switching out some of the cluster centers. 
In this section, we show how to find a constant factor
approximation for the $k$-center objective in this setting. 

Given the stable extension of any optimization problem, 
an $\alpha$-optimal solution can always be computed by exhaustive
search on possible outputs.
A method with smaller  search space is to first construct an algorithm for
an extended version of the optimization problem
which allows a part of the output to be fixed.  
This extension is a special case of the stable extension. 
We can then perform a smaller
search over all subsets of the current output,
and apply the (restricted) extended algorithm to each instance.  
Some problems naturally
yield to such extensions:  for covering problems, we remove items
covered by the fixed component and apply the (approximate) covering
algorithm to what remains.

A less straightforward case arises in the context of clustering. 
Here, we want to extend a given subset of cluster centers with some
additional centers to minimize a particular cluster objective.  
For the $k$-center objective, the existing
2-approximation algorithm can be extended to handle fixed facilities
(this ``restricted'' extension is also interesting in itself).
More formally, for points in a metric space $m$, we have:

\begin{lemma}
Given a fixed set of centers $F$ and a set of $n$ points $P$, we can find
a set of clusters $C$ of size $s$ that guarantees to 2-approximate
the $k$-center objective, i.e.
%\[ \max_{p \in P} \min_{c \in C \cup F} m(p,c) \leq 
%   2 \min_{C : |C| = s} \max_{p \in P} \min_{c \in C \cup F} m(p,c) \]
\[ \max_{p \in P} \min_{c \in C \cup F} m(p,c) \leq 
   2 \min_{C : |C| = s} \max_{p \in P} \min_{c \in C \cup F} m(p,c) \]
\end{lemma}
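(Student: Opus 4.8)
The plan is to adapt the classical Gonzalez farthest-point ("2-approximation for $k$-center") greedy algorithm so that it respects a set of already-fixed centers $F$. Recall that in the standard setting, the greedy picks an arbitrary point as the first center, then repeatedly adds the point farthest from the current center set; after choosing $k$ centers this yields a clustering whose radius is within a factor $2$ of optimal. The key modification here is simply to \emph{seed} the center set with $F$ rather than with an arbitrary point. That is, I would initialize $C \gets \emptyset$ and maintain the working center set $C \cup F$; then, for $s$ iterations, I would compute for every point $p \in P$ its distance $m(p, C\cup F) = \min_{c \in C \cup F} m(p,c)$ to the current centers, select the point $p^\ast$ maximizing this distance, and add $p^\ast$ to $C$. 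After $s$ rounds, $|C| = s$ as required, and I would output $C$ (with the understanding that the effective clustering uses $C \cup F$).

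To prove the approximation guarantee, I would let $r^\ast = \min_{C : |C|=s} \max_{p\in P} \min_{c \in C\cup F} m(p,c)$ denote the optimal radius achievable by adding $s$ centers on top of $F$, and let $r$ be the radius of the set $C$ produced by the greedy. The goal is $r \le 2 r^\ast$. First I would observe the standard monotonicity property of the greedy: the sequence of ``farthest distances'' selected across iterations is non-increasing, and the final radius $r$ equals the farthest distance at termination, which is at most the distance realized just before the last point was added. The crux is a pigeonhole argument. Consider the $s+1$ ``witness'' points consisting of the $s$ greedily chosen centers together with the final farthest point $p^\ast$ (whose distance to $C\cup F$ equals $r$). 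By construction, any two of these witness points are at mutual distance at least $r$ \emph{from each other or from $F$}: specifically, each chosen center was, at the time of its selection, at distance $\ge r$ from all previously-chosen centers and from $F$ (since farthest-distances are non-increasing and $r$ is the last one).

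The pigeonhole step then proceeds as follows. In the optimal solution there is a center set $C^\ast \cup F$ with $|C^\ast| = s$, so $|C^\ast \cup F| = |F| + s$. Consider the witness set $C \cup \{p^\ast\}$ together with $F$: I claim each witness point is ``covered'' by some center of $C^\ast \cup F$ within radius $r^\ast$. Points covered by the \emph{same} optimal center must lie within $2r^\ast$ of each other by the triangle inequality. If all $s+1$ witnesses in $C\cup\{p^\ast\}$ were pairwise at distance $> 2r^\ast$ and each at distance $>2r^\ast$ from $F$, then they would require $s+1$ \emph{distinct} optimal centers drawn from $C^\ast$ alone (the $F$ centers being ruled out by the distance-from-$F$ condition), contradicting $|C^\ast| = s$. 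Hence the assumption $r > 2r^\ast$ forces a violation, giving $r \le 2r^\ast$.

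The main obstacle — and the place I would be most careful — is handling the fixed set $F$ cleanly inside the pigeonhole count, since the witness points must be separated not only from each other but also from $F$, and it is exactly the presence of $F$ as ``free'' centers that could otherwise absorb a witness and break the counting. The clean way to manage this is to note that the greedy never selects a point within distance $r$ of $F$ (again by the non-increasing property, because $F$ is present from the start), so every witness point is automatically at distance $\ge r > 2r^\ast$ from all of $F$ under the contradiction hypothesis; this guarantees no optimal center in $F$ can cover any witness, so all covering must come from the $s$ centers of $C^\ast$, completing the count. I would verify the degenerate edge cases (e.g.\ $P$ already fully covered by $F$, so the greedy distances are $0$, or $s=0$) separately, as they make the radius trivially optimal.
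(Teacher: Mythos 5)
Your proposal is correct and follows essentially the same route as the paper: seed the Gonzalez farthest-point greedy with the fixed centers $F$, then use the $s+1$ witnesses (the $s$ selected points plus the final farthest point), which are pairwise separated by the final radius and equally far from $F$, so that a pigeonhole count against the $s$ optimal added centers (with $F$ unable to cover any witness) yields the factor $2$. The paper phrases this as a direct two-case analysis rather than a contradiction, but the content is identical.
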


\begin{proof}
Here, we adapt the 2-approximation algorithm 
due to Gonzalez~\cite{Gonzalez:85}. 
Starting with $C = \emptyset$, we pick the point from the input that
is furthest from the current centers in $C \cup F$, i.e. 
\[ C \gets C \cup \{\arg\max_{p \in P} \min_{c \in C \cup F} m(p,c)\} \]
and iterate until $|C|=s$. 
This requires $O((|F|+s)n)$ distance computations. 

To see the approximation guarantee, consider the next point $q$ that would
be added to $S$ if we continued the algorithm for one more iteration. 
This is some distance $d$ from its closest center in $C \cup F$. 
Therefore, we have that all points in $C \cup \{q\}$ are separated by
$d$, and further that every point in $C \cup \{q\}$ is at least $d$
from every point in $F$. 
This can be seen by contradiction: if any point in $C$ were closer
than $d$, then it would not have been picked as the furthest point
ahead of $q$. 

This provides a witness for the cost of the clustering. 
There are two cases: 
(i) some point in $C \cup \{q\}$ is assigned to some center in $F$
under the optimal clustering;
or (ii)
some pair of points in $C \cup \{q\}$ are assigned to the same center
under the optimal clustering. 
If (i) holds, then immediately we have that the cost of the optimal clustering
is at least $d$. 
If (ii) holds, then we have two points separated by distance $d$ in
the same cluster.
Then the cost of the optimal clustering is at least $d/2$, 
which happens when their cluster center is distance $d/2$ from each
(it cannot be closer to both without violating the triangle
inequality). 
%The best case is if the cluster center is distance $d/2$ from each, 

On the other hand, since $q$ is the furthest from its closest center
in $C \cup F$, it follows that {\em all} points are within $d$ of their
closest center, and therefore the cost of the clustering found by the
algorithm is $d$, which is at most twice the optimal cost. 
\end{proof}

This indicates that we can find a 2-approximation for the stable
extension by considering all subsets of the current clustering, and
applying the above lemma. 
However, for subsets larger than $k/2$, we can simply work with the
(approximate) solution that switches out all $k$ centers for the
optimal $k$-center clustering on the current set: this at most doubles
the changeout cost, while improving the clustering cost. 
Thus, we only need to consider subsets of size at most $k/2$.

%\newpage

\section{Concluding Remarks}

We have introduced and formalized a model of stability and fit, where
the benefits of an improved solution have to be traded off against the
costs of changing from the current status quo. 
From this motivation, we have provided stable extensions of sampling
and optimization algorithms, and demonstrated their ability in
practice. 

Our work was inspired by the need to improve the performance
of an  application management system.  
Our ongoing work is to extend our comparison to these scenarios, and
evaluate against previously deployed ad-hoc approaches.
%We
%are currently implementing our stable top-$k$ and stable
%sampling algorithms and comparing 
%performance to that of previously deployed  ad-hoc approaches and to 
%time-decayed averaging. 

\bibliographystyle{plain}
\bibliography{cycle}

\begin{thebibliography}{10}

\bibitem{StatMethods:ebook}
{\em NIST/SEMATECH e-Handbook of Statistical Methods}.
\newblock 2012.

\bibitem{parMST:focs98}
P.~K. Agarwal, D.~Eppstein, L.~J. Guibas, and M.~R. Henzinger.
\newblock Parametric and kinetic minimum spanning trees.
\newblock In {\em FOCS}, 1998.

\bibitem{BorodinLinialSaks:JACM1992}
A.~Borodin, N.~Linial, and M.~E. Saks.
\newblock An optimal on-line algorithm for metrical task system.
\newblock {\em J. ACM}, 39(4):745--763, 1992.

\bibitem{BrEaJo:1972}
K.~R.~W. Brewer, L.~J. Early, and S.~F. Joyce.
\newblock Selecting several samples from a single population.
\newblock {\em Australian Journal of Statistics}, 14(3):231--239, 1972.

\bibitem{varopt_full:CDKLT10}
E.~Cohen, N.~Duffield, C.~Lund, M.~Thorup, and H.~Kaplan.
\newblock Efficient stream sampling for variance-optimal estimation of subset
  sums.
\newblock {\em SIAM J. Comput.}, 40(5), 2011.

\bibitem{CoSt:pods03f}
E.~Cohen and M.~Strauss.
\newblock Maintaining time-decaying stream aggregates.
\newblock {\em J. Algorithms}, 59:19--36, 2006.

\bibitem{Fenwick:1994}
P.~M. Fenwick.
\newblock A new data structure for cumulative frequency tables.
\newblock {\em Softw. Pract. Exper.}, 24(3):327--336, 1994.

\bibitem{GGT}
G.~Gallo, M.~D. Grigoriadis, and R.~E. Tarjan.
\newblock A fast parametric maximum flow algorithm and applications.
\newblock {\em SIAM J. Comput.}, 18:30--55, 1989.

\bibitem{Gonzalez:85}
T.~F. Gonzalez.
\newblock Clustering to minimize the maximum intercluster distance.
\newblock {\em Theoretical Computer Science}, 38(2-3):293--306, 1985.

\bibitem{Gusfield}
D.~Gusfield.
\newblock Parametric combinatorial computing and a problem of program module
  distribution.
\newblock {\em J. Assoc. Comput. Mach.}, 30:551--563, 1983.

\bibitem{Hajek1964}
J.~H{\'a}jek.
\newblock Asymptotic theory of rejective sampling with varying probabilities
  from a finite population.
\newblock {\em The Annals of Mathematical Statistics}, 35(4):1491--1523, 1964.

\bibitem{HolmLT:JAMC01}
J.~Holm, K.~de~Lichtenberg, and M.~Thorup.
\newblock Poly-logarithmic deterministic fully-dynamic algorithms for
  connectivity, minimum spanning tree, 2-edge, and biconnectivity.
\newblock {\em J. ACM}, 48(4), 2001.

\bibitem{Karp}
R.~M. Karp.
\newblock A characterization of the minimum cycle mean in a digraph.
\newblock {\em Discrete Mathematics}, 23:309--311, 1978.

\bibitem{KarpOrlin81}
R.~M. Karp and J.~B. Orlin.
\newblock Parametricshortestpath algorithms with an application to cyclic
  staffing.
\newblock {\em Discrete Applied Mathematics}, 3:37--45, 1981.

\bibitem{Mi}
N.~Megiddo.
\newblock Combinatorial optimization with rational objective functions.
\newblock {\em Math. of Oper. Res.}, 4:414--424, 1979.

\bibitem{Sankowski:soda07}
P.~Sankowski.
\newblock Faster dynamic matchings and vertex connectivity.
\newblock In {\em SODA}, pages 118--126, 2007.

\bibitem{SleatorTarjan:1985}
R.~E. Sleator, D. D.and~Tarjan.
\newblock Self-adjusting binary search trees.
\newblock {\em J. ACM}, 32(3):652--686, July 1985.

\end{thebibliography}

% \newpage

%\clearpage

\appendix
\section{Incremental stable PPS}  \label{moredynpps}
Section~\ref{dynpps} provided an overview of our approach to incremental
 (``dynamic'') maintenance of an $\alpha$-stable PPS distribution
under modifications of the weight of
  a single entry at a time.

We now provide full details of the data structures we use.
To gain intuition, we first consider the special case ($a=0$) of
 maintaining a PPS distribution. 
Conceptually,  entries are maintained in a sorted order according to
their weight and we also track
the respective $\tau$ value.  
We use a structure, such as Fenwick tree~\cite{Fenwick:1994},  that
supports prefix sums, lookups, insertions and deletions in $O(\log n)$ time.
When an entry is modified, it is deleted from its previous position
 and reinserted with its new weight. 
If the weight decreased (increased), the threshold can only decrease
(respectively, increase).   
The respective new $\tau$ (new solution of \eqref{ppstau}) can be computed
with a logarithmic number of lookups.
An actual sample can be maintained using permanent random numbers
(PRN).  
This is facilitated by
maintaining another sorted order according to the ratios $w_i/u_i$
(where $u_i$ is the PRN of entry $i$).  The sample includes all
entries with $w_i/u_i \geq \tau$.  For this, we use a data structure
that supports insertions, deletions, and lookups (of the smallest
ratio that is larger than a value) in $O(\log n)$ time such as a self
adjusting binary tree \cite{SleatorTarjan:1985}.

In the case of  maintaining a stable sample for general $a$,
probabilities in the stable sample only need to be modified when the ratio
$w_i/p_i$ of a modified input entry is outside the current
$(\underline{\tau},\overline{\tau})$ range.  
In the batch algorithm we worked with the ordered list of ratios
$w_i/p_i$.  
The prefix and suffix of this list formed PPS samples, and thus, within
the suffix and prefix of the ordered ratios list, the ratio order
corresponds to the weights order. 
 But in the middle range, for ratios in
$(\underline{\tau},\overline{\tau})$, the order of ratios may not
correspond to the order of weights.

A critical observation for our analysis is that once two entries are
``processed together'' in a prefix or suffix of the sorted ratios
list, and thus the heavier one precedes the lighter one, their
relative position remains the same in the sorted ratios list,
even if they are no longer members of the prefix or suffix (have
ratios below $\underline{\tau}$ or above $\overline{\tau}$).   This
situation only changes, i.e. $w_i > w_j$ and $w_i/p_i < w_j/p_j$
when the weight of at least one of the entries is modified.
Intuitively, in our analysis, we charge the cost of ``correcting''
this order when an entry works its way back into a prefix or suffix to the
modifications in the input.

%\balance
 The main structure that we use to support this analysis is what we
 call a {\em stretch},  which maintains a subset
 of entries.  The stretch supports the following operations in $O(\log
 n)$ time:
   deletion of an entry, splitting a stretch into two stretches that
 include all entries with weights that are respectively below and
 above some value, merging two stretches with non-overlapping weight
 ranges, performing a lookup of the smallest/largest weight above some
 value, and returning the weight of all entries that are below or above a
certain value.   
Consequently, two stretches with overlapping weight ranges can be merged in time
proportional to $\log(n)$ times the number of interleaved segments in their
joint sorted order.
  %%  To merge a_1,a_2,...  and b_1,b_2,   we first lookup a_1 in the
  %%  list b to locate it between b_i and b_{i+1}.  We then split off
  %%  b to get the segment b_1,...,b_i.  We then lookup b_{i+1} in a,
  %%  and split a at the location we found.  We take the new first
  %%  element of a and locate it in b and so on.  We then merge all
  %%  the sorted disjoint segments into one order.

 Instead of treating entries as single entities, they are grouped to stretches.
    All entries currently in the same stretch share the same ratio
$\tau = w_i/p_i$ (which we refer to as the threshold of the stretch)
or else have $w_i>\tau$ and $p_i=1$.
We thus assign a ratio to a stretch as the 
range that spans its threshold and its largest weight. 
All stretches  are maintained in a
structure which supports insertion, deletion, and accessing stretches
with maximum or minimum threshold in $O(\log n)$ time. 
This structure corresponds to the ``sorted list of ratios'' in the
batch algorithm.

When an entry is modified, it is deleted from its current stretch and
inserted as a single-entry stretch. 
To recompute the distribution in response to the modification,
we apply an extension of our batch algorithm that can manipulate
stretches:
 If the update produces a higher maximum ratio or lower minimum ratio
 and $\alpha$-stability is violated,
 we obtain new values for $\underline{\tau}$ and
$\overline{\tau}$  while processing stretches  by decreasing and
increasing ratios, as done (for single entries) by the batch algorithm.

 As the prefix (and suffix) of the ``sorted'' ratios
are processed in decreasing ratio order,  stretches are merged  into a single
prefix (or a single suffix) stretch.  We stop 
when
the difference of squares is exactly $a$.  The cost of merging
stretches depends on the number of interleaved sorted segments of
stretches, and each segment can be ``charged'' to a modification of an item.

\eat{
%% this is some approximate incremental version.  Commented out
 We outline an incremental approximate version, based on discretizing
 the threshold values.   The approximation means that probabilities are
adjusted whenever there is an adjustment with marginal improvement at
 least $(1+\delta)a$ and not adjusted when the best improvement is
 below $(1-\delta)a$.
 
 We maintain all items in a set  of buckets according to
 $\lfloor (\frac{w_i}{p_i})^2/(\delta a)\rfloor $.
Note that the maximum gap in ratios at any point is at most
$(1+\delta)a$:
When the gap is above that, an adjustment of sampling probabilities
is implemented.
  Hence, there are $1/\delta$ buckets.

  We maintain the buckets in a sorted list according
 to ratios. 
With each bucket, we maintain its 
``cumulative $\Delta^+$'' and ``cumulative $\Delta^-$,  which is the
total change in sampling probabilities needed in order to
bring all ratios in preceding (respectively, subsequent) buckets to
that of our bucket.  When buckets are merged into a higher/lower one,
their probabilities get discretized according to the new bucket
boundary.
There is at most one item in each bucket with 
probabilities that do not conform to bucket boundary
only if their weight was just modified or 

  When an entry is modified, the current contribution is
subtracted from all buckets and a new contribution is added.
The number of operations is the number of buckets.

 A modification of sampling probabilities means merging a prefix or
  suffix of buckets.    To make sure the increase and decrease are the
  same, we may need to modify a subset of items in the ``border''
  buckets.  These items can be processed in arbitrary order.

Time:  the time required depends on the number of buckets, which
degrades with $a$ and $1/\delta$.
If $a=0$, there is a single bucket.   Generally, the difference
between the maximum and minimum bucket ratios is at most $a$.

  This preprocessing allows us to quickly obtain approximate
  $\underline{\tau}$ and $\overline{\tau}$ and also the $\Delta^+$ and
  $\Delta^-$ value   corresponding to each prefix and suffix.

}
%\bibliographystyle{plain}
%\bibliography{cycle}

\end{document}